\newtheorem{lemma}{{\sc Lemma}}
\newtheorem{cor}{{\sc Corollary}}
\newtheorem{theorem}{{\sc Theorem}}
\newtheorem{example}{{\sc Example}}
\newtheorem{definition}{{\sc Definition}}
\newtheorem{remark}{{\sc Remark}}
\newenvironment{proof}{\noindent {\bf \sl Proof\/}:\enspace}
{\hfill $\blacksquare{}$ \vspace{12pt}}
\newenvironment{proofs}{\noindent {\bf \sl Proof\/}:\enspace}
{\hfill $\square{}$ \vspace{12pt}}
\title{\bf {\Large Stability in Many-to-One Matching with Couples having Responsive Preferences}\footnote{This work was made possible through the invaluable guidance and constructive feedback provided by Dr. Hans Peters, to whom the authors express their deepest gratitude.}}
\author[1]{Shashwat Khare}
\author[2]{Souvik Roy\footnote{Corresponding author: souvik.2004@gmail.com}}
\author[3]{Ton Storcken}
\affil[1]{Independent Researcher}
\affil[2]{Statistical Sciences Division, Indian Statistical Institute}
\affil[3]{Quantitative Economics, Maastricht University}
\begin{document}
	
	\maketitle
	
	\begin{abstract}
		\noindent This paper studies matching markets where institutions are matched with possibly more than one individual. The matching market contains some couples who view the pair of jobs as complements. First, we show by means of an example that a stable matching may fail to exist even when both couples and institutions have responsive preferences. Next, we provide conditions on couples' preferences that are necessary and sufficient to ensure a stable matching for every preference profile where institutions may have any responsive preference. Finally, we do the same with respect to institutions' preferences, that is, we provide conditions on institutions' preferences that are necessary and sufficient to ensure a stable matching for every preference profile where couples may have any responsive preference.

	\end{abstract}
	\noindent {\sc Keywords.} two-sided matching, stability, responsiveness \\
	\noindent {\sc JEL Classification Codes.}  C78, D47

	\section{Introduction}
	\subsection{Background of the problem}
A substantial literature has developed on various market designs aimed at identifying an ``optimal'' matching procedure in labor markets. In many centralized labor markets, stability is a crucial condition for optimality. A matching is said to be stable if there exist no institution-individual pairs who are not matched to each other but would both strictly prefer to be matched together over their current allocation. Such pairs are referred to as \emph{blocking pairs}. For labor markets consisting of institutions (hereafter referred to as hospitals) and individuals (hereafter called doctors), stable matchings are known to exist under suitable domain restrictions.

Roth \cite{roth1984labor} was the first to highlight that the presence of couples in labor markets may preclude the existence of a stable matching. This phenomenon arises because couples may perceive pairs of jobs as complements, thus violating the assumption of independence in individual choices.

Subsequently, Kelso and Crawford \cite{kelso1982job}, Roth \cite{roth1985college}, Alkan and Gale \cite{alkan2003revealed}, and Hatfield and Milgrom \cite{hatfield2005matching} demonstrated that a sufficient degree of substitutability in preferences guarantees the existence of a stable matching. These works assume substitutability in hospitals' preferences over sets of doctors. Later, Klaus and Klijn \cite{klaus2005responsive} introduced the assumption of responsiveness in couples' preferences over ordered pairs of hospitals and showed that under such conditions, stable matchings exist. Responsiveness entails that unilateral improvements in the preference of one member of the couple translate into an improved outcome for the couple as a whole.
	
	\subsection{Motivation of the problem} As mentioned, we assume that the institutions are hospitals and the individuals are doctors, where a subset of doctors consists of couples. We assume that couples' preferences are responsive over pairs of hospitals, and hospitals' preferences are responsive over sets of doctors.

It is noteworthy that, given the individual preferences of the couple members, more than one (joint) couple preference may satisfy responsiveness. For example, consider a couple \( c = \{f,m\} \) and two hospitals \( h_1 \) and \( h_2 \) such that \( f \) prefers \( h_1 \) to \( h_2 \), but \( m \) prefers \( h_2 \) to \( h_1 \). Suppose further that \((x,y)\) denotes an allocation where \( f \) is matched with \( x \) and \( m \) is matched with \( y \). By responsiveness, we know that \((h_1,h_2)\) is preferred to both \((h_1,h_1)\) and \((h_2,h_2)\), whereas both \((h_1,h_1)\) and \((h_2,h_2)\) are preferred to \((h_2,h_1)\). However, responsiveness does not impose an ordering between the pairs \((h_1,h_1)\) and \((h_2,h_2)\). Thus, multiple complete responsive extensions of couple preferences exist. Similarly, given a hospital's preference over doctors, more than one responsive preference over subsets of doctors can arise.

Allowing for complete responsive preferences for both couples and hospitals is not innocuous; in such cases, stable matchings may fail to exist. On the other hand, permitting comparability between pairs such as \((h_1,h_1)\) and \((h_2,h_2)\) in the example above seems natural. Likewise, it appears reasonable that hospitals admitting teams of doctors compare, for instance, two pairs of doctors where one pair contains an A-ranked and a C-ranked doctor, and the other pair contains two B-ranked doctors. As we will see, allowing such comparisons influences the stability of matchings. 

However, it is important to note that such comparisons might not always be possible, even when hospitals’ preferences satisfy responsiveness. For example, by standard replication of hospitals up to their capacities to transform the problem into a one-to-one matching, these pairs of doctors are effectively considered incomparable. This occurs because, from the hospital's perspective, a B-ranked doctor is better than a C-ranked doctor but worse than an A-ranked doctor. Consequently, the replication approach implies that these two sets of doctors are incomparable. 

The possibility of allowing all such comparisons is the primary reason why the results presented here differ from those of Klaus and Klijn \cite{klaus2005responsive}, who established the existence of stable matchings for every case of couples’ responsive preferences.

	\subsection{Contribution of the paper} 
In this paper, we formalize the notion of complete responsive preferences for both hospitals and couples and examine its implications for the existence of stable matchings.

First, we demonstrate by example that stable matchings may fail to exist under arbitrary responsive preference profiles. Next, we introduce a condition on couples' preferences, termed \emph{extreme-altruism}, which is necessary and sufficient to guarantee the existence of a stable matching for every responsive extension of hospitals’ preferences over sets of doctors. To illustrate this condition, consider \( k \) hospitals \( \{h_1, \ldots, h_k\} \) and a couple. Suppose both members of the couple strictly prefer \( h_i \) to \( h_j \) for all \( 1 \leq i < j \leq k \). Then, extreme-altruism requires that for any \( 1 \leq i < j \leq k \), the couple prefers the allocation where one member is matched with \( h_i \) and the other with \( h_k \) over the allocation where both are matched with \( h_j \).

Subsequently, we provide necessary and sufficient conditions on hospitals’ preferences that ensure the existence of stable matchings for every responsive extension of couples’ preferences. We refer to this condition as \emph{aversion to couple diversity}. To explain this, consider a hospital \( h \) with a preference order \( P_h \) over individual doctors, a couple \( c = \{f,m\} \), and two other doctors \( d, d' \) (possibly another couple). Suppose \( f P_h d P_h d' P_h m \). As previously discussed, responsiveness imposes no restrictions on \( h \)'s relative preference between the sets \( \{f,m\} \) and \( \{d,d'\} \). The aversion to couple diversity condition states that in such cases, the hospital prefers \( \{d,d'\} \) to \( \{f,m\} \). Informally, this condition asserts that a hospital disfavors employing couples whose members are relatively more dissimilar with respect to its ranking of individual doctors.

Another significant contribution of this paper is that, beyond characterizing preference profiles that guarantee stable matchings, we also provide algorithms that produce a stable matching whenever one exists.

Thus, we believe our results offer a more comprehensive understanding of the existence of stable matchings when hospitals and couples have responsive preferences, thereby complementing the work of Klaus and Klijn \cite{klaus2005responsive}.
		
\subsection{Organization of the paper}
In the next section, we present the formal framework of our model and provide all necessary definitions. We also introduce and describe an algorithm for matching doctors with hospitals, which will be employed throughout this paper. 

In Section~\ref{section3}, we provide an example demonstrating that the existence of a stable matching is not guaranteed under arbitrary responsive preference profiles. 
Section~\ref{sec3} establishes conditions on couples' preferences that are necessary and sufficient to guarantee the existence of a stable matching for arbitrary responsive extensions of hospitals' preferences. 
In Section~\ref{sec4}, we state conditions on hospitals' preferences that are necessary and sufficient to ensure the existence of a stable matching for arbitrary responsive extensions of couples' preferences. 
We conclude in Section~\ref{sec5} by presenting a formal example that illustrates the distinction between our results and those of Klaus and Klijn~\cite{klaus2005responsive}.

%Polishing done till here

	\section{The framework}
	We consider many-to-one matchings between doctors and hospitals. We denote by $H$ a finite set of hospitals. Each hospital $h \in H$ has a finite capacity, denoted by $\kappa_h$.
	
	We denote by $D$ a finite set of doctors. We assume that $D=M \cup F \cup S$ where $F,M,S$ are pairwise disjoint sets with $|M|=|F|$. Here, the doctors in $F$ and $M$ together form fixed couples. Also, the doctors in $S$ are those who are not part of any couple. We  denote the set of couples by $C=\{\{f_1,m_1\},\{f_2,m_2\},\ldots\}$ and a generic couple by $c=\{f,m\}$. We denote by $\lambda \notin H$, a dummy hospital which we use to represent a doctor being unemployed.

	Throughout this paper, we assume $|H| \geq 2$, $|D|\geq 4$, $|C| \geq 1$, and $\kappa_h \geq 2$ for all $h \in H$ and $\kappa_{\lambda}=|D|$.
		
	For notational convenience, we do not always use braces for denoting singleton hospitals, doctors or couples.

	\subsection{Matching}
	
	\begin{definition}
		A matching $\mu$ is a correspondence from $H \cup \{\lambda\}$ to $D$ such that for all $h \in H$, $|\mu(h)| \leq \kappa_{h}$. Moreover $\mu(h_1) \cap \mu(h_2) = \emptyset$ for any $h_1,h_2 \in H$ with $h_1 \neq h_2$.
	\end{definition}
	
	For ease of notation, whenever $d \in \mu(h)$ for some $d \in D$ and $h \in H$,  we write  $\mu(d)=h$. We say that a doctor is matched with $\lambda$ to mean that the doctor is unemployed. More formally, if $d \notin \mu(h)$ for all $h \in H$, then  $\mu(d)=\lambda$. For a couple $c=\{f,m\} \in C$ and for hospitals $h,h' \in H \cup \{\lambda \}$, we write $\mu(c)=(h,h')$ to mean  $\mu(f)=h$ and $\mu(m)=h'$. 
	Further, for  a hospital $h \in H$ and a matching $\mu$, we say $h$ has $\kappa_h - |\mu(h)|$ vacant positions at $\mu$.
	
	\subsection{Preferences}
	In this section, we introduce the notion of preferences of doctors and hospitals, and present some restrictions on them. 
	
	For a set $X$, we denote by $\mathbb{L}(X)$ the set of linear orders on $X$, i.e., complete, reflexive, transitive, and antisymmetric binary relations over $X$. An element of $\mathbb{L}(X)$ is called a preference (over $X$). For any $i \in H \cup D \cup C$, $R_i$ denotes a preference of $i$ and $P_i$ denotes its strict part. Since a preference is antisymmetric $x R_i y$ implies either $x = y$ or $x P_i y$. We say $x$ is weakly preferred to $y$ to mean  $xR_iy$, and $x$ is (strictly) preferred to $y$ to mean $xP_iy$. For $P_i \in \mathbb{L}(X)$ and $k \leq |X|$, we define the $k$-th ranked element in $P_i$, denoted by $r_k(P_i)$, as follows:  $r_k(P_i)= x \in X$ if $|\{y \in X: yR_i x\}|=k$.
	
\subsubsection{Preferences of hospitals} 
For any hospital $h \in H$, let $D_h$ be the set of acceptable doctors. A preference of hospital $h$, denoted by $\bar{P}_h$, is a linear order over $D_h$. Thus, $\bar{P}_h \in \mathbb{L}(D_h)$. A hospital prefers to have any doctor from this set of acceptable doctors, over having a vacant spot. Similarly, a hospital prefers to have a vacant spot to having doctors which do not belong to the set of acceptable doctors. We assume that the dummy hospital $\lambda$ finds all doctors acceptable.	Thus, $D_{\lambda}=D$. Also, $\lambda$ is indifferent between all doctors.
	
	For any hospital $h \in H$, a preference $\bar{P}_h$ over individual doctors is extended  to a preference $P_h$ over feasible subsets of acceptable doctors $\{D' \subseteq D_h : |D'| \leq \kappa_h \}$. 
		
	\begin{definition}
		We say $P_h \in \mathbb{L}(\{D' \subseteq D_h : |D'| \leq \kappa_h \})$ is \emph{responsive} if	
		\begin{itemize}
			\item[(i)] for all $D' \subseteq D_h$ with $D' \neq \emptyset$ and $|D'| \leq \kappa_h$, $D' P_h \emptyset$.
			\item[(ii)]  for all $d,d' \in D_h$, $\{d\} R_h \{d'\}$ if and only if $d \bar{R}_h d'$, and
			
			\item[(iii)] for all $D',D'' \subseteq D_h$ with $|D'| < \kappa_h$, $|D''| < \kappa_h$ and all $d \in  D_h \setminus (D'\cup D'')$,  $(D' \cup \{d\}) P_h (D'' \cup \{d\})$ if and only if $D' P_h D''$.
		\end{itemize}
		
	\end{definition}
	
	Having define $P_h$ over all feasible subsets of acceptable doctors, we extend this preference over the set of all feasible subsets of doctors $\{D' \subseteq D: |D'| \leq \kappa_h \}$.
	
	\begin{definition}
	For all $D' \subseteq D$ such that $|D'| \leq \kappa_h$ and $D' \not\subset D_h$, we have $\emptyset P_h D'$.
	\end{definition}

		For notational convenience, for any hospital $h \in H$, any couple $c=\{f,m\} \in C$ and any doctor $d \in D \setminus \{f,m\}$, $d P_h c$ means $d P_h f$ and $d P_h m$.

	\subsubsection{Preferences of doctors and couples}
	A preference of a doctor $d \in D$, denoted by $P_d$, is an element of $\mathbb{L}(H \cup \{\lambda\})$.	A preference of a couple $c=\{f,m\} \in C$, denoted by $P_c$, is an element of $\mathbb{L}((H \cup \{\lambda\})^2)$.  We call a preference of a couple responsive if a unilateral improvement in the position of one member of the couple is beneficial for the couple.
	
	\begin{definition}\label{def1} Let $c=\{f,m\} \in C$ be a couple. Let $P_{f}$ be a preference of $f$ and $P_m$ be a preference of $m$. A preference $P_c \in \mathbb{L}({(H \cup \{\lambda\})}^2)$ of the couple $c$ is called responsive (with respect to $P_f$ and $P_m$) if for all $h,h_1,h_2 \in H \cup \{\lambda\}$, we have
		\begin{enumerate}
			\item[(i)] $(h_1,h)P_c(h_2,h)$ if and only if $h_1 P_f h_2$, and 
			\item[(ii)] $(h,h_1)P_c(h,h_2)$ if and only if $h_1 P_m h_2$.			 
		\end{enumerate}
	\end{definition}
	
For any $c=\{f,m\} \in C$, a responsive preference $P_c$ induces unique marginal preferences $P_f$ and $P_m$ for $f$ and $m$ respectively. 
	
	\subsubsection{Preference profiles and matching problems}
	 A preference profile is a collection of responsive preferences for all hospitals in $H$, all doctors in $D$ and all couples in $C$. Thus, a preference profile $P$ is a tuple of preferences $(\{P_d\}_{d\in D},\{P_c\}_{c \in C},\{P_h\}_{h \in H})$, where for all $d \in D$, $c \in C$ and  $h \in H$, $P_d$ is a preference of doctor $d$, $P_c$ is a responsive preference of couple $c$, and $P_h$ is a responsive preference of hospital $h$ over acceptable and feasible sets of doctors, respectively. Note that, for any hospital $h \in H$, $D_h$ is an inherent part of $P_h$. This means that a preference $P_h$ automatically specifies the acceptable set $D_h$ of the hospital.
	 	
	A matching problem is a tuple consisting of a set of hospitals with corresponding capacities, a set of doctors with its partition into $F$, $M$, $S$, and a corresponding preference profile.

	\subsection{Stability}
There are different notions of stability based on different types of permissible blocking coalitions.
	
	Let $\mu$ be a matching and $P$ be a preference profile. We say a hospital $h$ \emph{prefers} to have a set of doctors $D'$ (possibly empty) to a subset of doctors in $\mu(h)$  if there is $D'' \subseteq \mu(h)$ with $D' \cap D'' = \emptyset$ such that $\{(\mu(h) \setminus D'' )\cup D'\}P_{h}\mu(h)$. Similarly, we say a doctor $d$ (or a couple $c$) \emph{prefers} a hospital $h$ to $\mu(d)$ (or a pair of hospitals $(h,h')$ to $\mu(c)$)  if $h P_d \mu(h)$ (or $(h,h')P_c\mu(c)$). Note that if a hospital prefers a set of doctors to its assignment at $\mu$, then by definition, that hospital is not matched with any of those doctors  at $\mu$. Moreover, it could also be that $h$ has some unacceptable doctors $D''$ in $\mu(h)$, thus $h$ prefers $D'=\emptyset$ to $\mu(h)$. 
	
	Similarly, if a doctor (or a couple) prefers a hospital (or a pair of hospitals) to its assignment at $\mu$, then that doctor (or at least one member of that couple) is not matched with the hospital (or the corresponding hospital) at $\mu$. 
	
    Now, we define the notion of blocking. Note that, since $\lambda$ is indifferent between all sets of doctors, and $D_{\lambda}=D$, thus $\lambda$ always prefers to have any doctor than not having that doctor. First, we introduce the notion of blocking between a hospital and a doctor in $S$. 
    
	\begin{definition}
		Let $s \in S$, $h \in H \cup \{\lambda\}$ and let $\mu$ be a matching. Then $(h,s)$ \emph{blocks} $\mu$ if $h$ prefers $s$ to $\mu(h)$ and $s$ prefers $h$ to $\mu(s)$.
	\end{definition}

	Next, we define the notion of blocking between a pair of hospitals and a couple.
	
	\begin{definition}\label{def5}
		Let $\mu$ be a matching and let $c=\{f,m\} \in C$ and $(h_f,h_m) \in {(H \cup \{\lambda\})^2}$. Then, $((h_f,h_m),c)$ blocks $\mu$ if $c$ prefers $(h_f,h_m)$ to $\mu(c)$ and 
		\begin{enumerate}
			\item[(i)] if $h_f \neq h_m$ and $\mu(f) \neq h_f$, then $h_f$ prefers $f$ to $\mu(h_f)$, 
			\item[(ii)] if $h_f \neq h_m$ and $\mu(m) \neq h_m$, then $h_m$ prefers $m$ to $\mu(h_m)$, and
			\item[(iii)] if $h_f = h_m$, then $h_f$ prefers $\{f,m\}$ to $\mu(h_f)$.
		\end{enumerate}	
		
	\end{definition}

   It is worth mentioning that the blocking notion takes complementarity of a couple being accepted into account (by allowing the notion of a hospital
being interested in a couple) but it does not take the couple into account when accepting single doctors and possibly removing members of a couple. In other words, there is an asymmetry here.

We consider this asymmetry in our model since it is not practical for big institutions like hospitals to consider the possibility of losing a member of a couple while removing the other member. This is because this possibility depends on factors like which hospital the removed member will join, whether the couple prefers to be together in that hospital, etc. Clearly, such situations can only be modeled by using a farsighted notion of blocking, which would complicate the model considerably. 
    
     Thus, by allowing the notion of a hospital being interested in a couple, the blocking definition takes complementarity of a couple being accepted into account. However, the hospital does not take this into account while accepting single doctors at the cost of removing a member of the couple from the hospital. The asymmetry arising here is the main reason, why the results obtained in this chapter are different to the results obtained by choice function approach in many-to-many matchings. \footnote{See Konishi and \"{U}nver\cite{konishi2006groups}, Echenique and Oviedo\cite{echenique2006stability}, Hatfield and Kojima\cite{hatfield2010substitutes}, for notions of stability in many-to-many matchings.}

	Whenever a matching $\mu$ is blocked by  $((h_f,\mu(m)),c)$  for some $c =\{f,m\} \in C$ and some $h_f \in H \cup \{\lambda\}$, for ease of presentation we say that $\mu$ is blocked  by $(h_f,f)$. Similarly we say that $\mu$ is blocked  by $(h_m,m)$ if $\mu$ is blocked by  $((\mu(f),h_m),c)$.

Our next remark follows from the responsiveness of couples' preferences and Definition \ref{def5}.
	
	\begin{remark}
		Let $\mu$ be a matching and $c=\{f,m\}$ be a couple. Suppose for some $x \in \{f,m\}$ and some hospital $h_x \in H$, we have $h_x P_x \mu(x)$ and $((\mu(h_x) \setminus d) \cup \{x\}) P_{h_x} \mu(h_x)$ for some $d \in \mu(h_x) \setminus \{f,m\}$, then $(h_x,x)$ blocks $\mu$. 
	\end{remark}	
	
	A matching is stable if it cannot be blocked by any blocking pair. More formally, we get the following definition.
	
	\begin{definition}
		A matching $\mu$ is stable, if 
		\begin{itemize}
			\item[(i)] for all $h \in H \cup {\lambda}$ and $s \in S$, $(h,s)$ does not block $\mu$,
			\item[(ii)] for all $(h_f,h_m) \in (H \cup {\lambda})^2$ and $c \in C$, $((h_f,h_m),c)$ does not block $\mu$, and
			\item[(iii)] for all $h \in H$, $(h,\emptyset)$ does not block $\mu(h)$, i.e., $h$ does not prefer $\emptyset$ to $\mu(h)$.
		\end{itemize} 
	\end{definition}
	
	Now, we define the concept of individual rationality. 
	\begin{definition}
		A matching $\mu$ is individually rational if 
		\begin{enumerate} 
			\item[(i)] for all $s \in S$, $\mu(s) R_s \lambda$,
			\item[(ii)] for all $c \in C$, $\mu(c) R_c (\lambda,\lambda)$, and 
			\item[(iii)] for all $h \in H$ and all $d \in \mu(h)$, $d \in D_h$. 
		\end{enumerate}
	\end{definition}	
	The next remark follows from the definition of stability. 
\begin{remark}
		Every stable matching is individually rational.
\end{remark}

	\subsection{Algorithm} 
	In this section we present a well-known algorithm called doctor proposing deferred acceptance algorithm (DPDA). This algorithm was introduced by Gale and Shapley\cite{gale1962college}. \footnote{See Knuth\cite{knuth1976stables}, Gusfield and Irving\cite{gusfield1989marriage}, Roth and Sotomayer\cite{roth1992two}, Aldershof and Carducci\cite{aldershof1996couple} for additional results on stable matching problem in two sided matching.} Our proofs for the  existence of stable matchings use a modification of DPDA. In what follows, we give a very short description of this algorithm. Take a profile $P$. Then, the DPDA algorithm at $P$ goes as follows. 
	
	\medskip
	\noindent \emph{DPDA}:  In step 1 of the algorithm, all doctors simultaneously propose to their most
	preferred hospitals. Each hospital $h \in H$  provisionally accepts the most preferred doctors according to $P_h$. If a hospital receives more than $\kappa_{h}$ proposals, then it rejects all the doctors which do not belong to its $\kappa_h$ most preferred doctors. In any
	step $k$, the unmatched doctors propose to their most preferred hospital
	from the remaining set of hospitals who have not rejected them in any of the earlier steps. In any step of DPDA, since any hospital $h \in H$ accepts the most preferred collection of doctors according to $P_h$, it may reject some doctors that it had provisionally accepted earlier.  Hospitals whose provisional list of accepted doctors is less than their maximum capacity can still add to their accepted
	list if they receive fresh proposals. Thus the algorithm terminates when each doctor is
	matched with some hospital or has been rejected by all acceptable hospitals.
	
	\begin{remark}
		Note that in DPDA, each individual doctor proposes according to his/her individual preference. Thus, couples' preferences do not play any role in this algorithm.
	\end{remark}
	
	It is well-known that the outcome of DPDA is optimal for doctors. That is, some doctor is worse off at every other stable matching. Moreover, by responsiveness and the structure of DPDA, it follows that the outcome of DPDA is individually rational.

	The following remark follows directly from the definition of DPDA.
	
	\begin{remark}\label{ton}
		Let $\mu$ be the outcome of DPDA. Let $d \notin \mu(h)$ for some $d \in D$ and $h \in H$. Then, $h P_d \mu(d)$ implies $d' P_h d$ for all $d' \in \mu(h)$.
	\end{remark}
	
In the following lemmas, we show that the outcome of DPDA cannot be blocked by a hospital and a single doctor or by a pair of different hospitals and a couple. Some of these results are well known outcomes of DPDA, but we prove them nevertheless for the sake of completeness.
	
	\begin{lemma}\label{HS}
		The outcome of DPDA cannot be blocked by a pair $(h,s)$ for any $h \in H$ and any $s \in S$.
	\end{lemma}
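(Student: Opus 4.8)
The plan is to argue by contradiction, exploiting that the doctor $s$ must have proposed to $h$ during DPDA together with the responsiveness of $P_h$. Suppose $\mu$ is the outcome of DPDA and, contrary to the claim, $(h,s)$ blocks $\mu$ for some $h \in H$ and $s \in S$. By the definition of blocking, $s$ prefers $h$ to $\mu(s)$, i.e. $h P_s \mu(s)$; in particular $h \neq \mu(s)$, so $s \notin \mu(h)$. I would then apply Remark~\ref{ton} with $d=s$ to conclude that $d' P_h s$ for every $d' \in \mu(h)$, so each doctor currently assigned to $h$ is ranked above $s$ in $\bar{P}_h$.

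Next I would record why the rejection of $s$ forces $h$ to be full whenever $s$ is acceptable. Since $s$ ranks $h$ strictly above $\mu(s)$, in DPDA $s$ proposes to $h$ before ever proposing to $\mu(s)$ and is rejected by $h$. A hospital rejects an acceptable doctor only when it already holds $\kappa_h$ strictly better doctors, and by the DPDA invariant that a hospital which has reached its capacity never falls below it and only improves its held set, this yields $|\mu(h)| = \kappa_h$ in the case $s \in D_h$. The rest of the argument then splits into two cases, according to whether $s$ is acceptable to $h$.

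If $s \notin D_h$, then for the blocking set $B=(\mu(h)\setminus D'')\cup\{s\}$ the extension rule gives $\emptyset P_h B$, while individual rationality of the DPDA outcome gives $\mu(h) \subseteq D_h$ and hence $\mu(h) R_h \emptyset$; chaining these gives $\mu(h) P_h B$, contradicting that $h$ prefers $s$ to $\mu(h)$. If instead $s \in D_h$, then $h$ is full, so any blocking set $B=(\mu(h)\setminus D'')\cup\{s\}$ must delete at least one doctor, i.e. $|D''|\ge 1$; fix some $d^\ast\in D''$. I would establish two consequences of responsiveness. First, stripping the common doctors in $\mu(h)\setminus D''$ one at a time via axiom (iii) — legitimate because $\kappa_h\ge 2$ keeps every intermediate set strictly below capacity — reduces the comparison of $(\mu(h)\setminus D'')\cup\{d^\ast\}$ with $B$ to the singleton comparison of $\{d^\ast\}$ with $\{s\}$, which favors $d^\ast$ since $d^\ast\bar{P}_h s$. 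Second, a monotonicity fact obtained from axioms (i) and (iii), namely that enlarging a feasible set by acceptable doctors strictly improves it, gives $\mu(h) R_h (\mu(h)\setminus D'')\cup\{d^\ast\}$. Chaining these yields $\mu(h) P_h B$, again contradicting that $h$ prefers $s$ to $\mu(h)$.

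The main obstacle is this last case: the careful, capacity-respecting manipulation of the responsive preference $P_h$ needed to show that swapping out one or more higher-ranked doctors in favor of the lower-ranked $s$ can never improve $h$'s assignment. The two auxiliary facts — the single-element swap comparison and the monotonicity of $P_h$ under adding acceptable doctors — are precisely what turn the informal observation ``every doctor held by $h$ beats $s$'' into a rigorous contradiction, and a secondary subtlety is justifying the fullness of $h$, which rests on the DPDA invariant above rather than on Remark~\ref{ton} alone.
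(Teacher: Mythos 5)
Your proof is correct and follows essentially the same route as the paper's: argue by contradiction, use Remark~\ref{ton} to conclude that every doctor in $\mu(h)$ is ranked above $s$, and then invoke responsiveness of $P_h$ to rule out any profitable swap. You are in fact somewhat more thorough than the paper, whose proof only treats single-doctor swaps $(\mu(h)\setminus\{d\})\cup \{s\}$ and leaves implicit the points you make explicit — the fullness of $h$ via the DPDA capacity invariant, the case of an unacceptable $s$, and blocks that remove more than one doctor.
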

	\begin{proofs}
		 Let $\mu$ be the outcome of DPDA. Assume for contradiction that $(h,s)$ blocks $\mu$ for some $h \in H$ and some $s \in S$. Since the outcome of DPDA is individually rational, $ s R_h \lambda$. Since $hP_s \mu(s)$, by the definition of DPDA and Remark \ref{ton}, either $s$ has not proposed to $h$ during the DPDA or  all the doctors in $\mu(h)$ are preferred to $s$ according to $P_h$. If $s$ has not proposed to $h$ during DPDA, then we have $\mu(s) P_s h$, a contradiction to the fact that $(h,s)$ blocks $\mu$. So, suppose $d P_h s$ for all $d \in \mu(h)$. Then, by responsiveness of hospitals' preferences, we have $\mu(h) P_h((\mu(h)\setminus \{d\})\cup s)$ for all $d \in \mu(h)$, and consequently, hospital $h$ will not block with $s$. This completes the proof of the lemma. 
	\end{proofs}

	\begin{lemma}\label{HC}
		The outcome of DPDA cannot be blocked by $((h_1,h_2),c)$ for any $h_1, h_2 \in H$ such that $h_1 \neq h_2$ and for any $c \in C$.
	\end{lemma}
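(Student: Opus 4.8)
The plan is to assume, for contradiction, that the DPDA outcome $\mu$ is blocked by $((h_1,h_2),c)$ with $h_1 \neq h_2$ and $c = \{f,m\}$, and to show that this forces one member of the couple to have an individual incentive to move in a way that the DPDA structure already precludes, exactly as in Lemma~\ref{HS}. By Definition~\ref{def5} the block supplies two facts: first, $c$ prefers $(h_1,h_2)$ to $\mu(c) = (\mu(f),\mu(m))$; and second, whenever $\mu(f) \neq h_1$ (resp. $\mu(m)\neq h_2$) the hospital $h_1$ prefers $f$ to $\mu(h_1)$ (resp. $h_2$ prefers $m$ to $\mu(h_2)$). The crux is to convert the joint preference of the couple into a strict individual improvement for $f$ at $h_1$ or for $m$ at $h_2$.

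First I would show that $(h_1,h_2) P_c (\mu(f),\mu(m))$ implies $h_1 P_f \mu(f)$ or $h_2 P_m \mu(m)$. Suppose neither holds, so $\mu(f) R_f h_1$ and $\mu(m) R_m h_2$. Inserting the intermediate pair $(h_1,\mu(m))$ and applying Definition~\ref{def1}, responsiveness yields $(\mu(f),\mu(m)) R_c (h_1,\mu(m))$ from $\mu(f) R_f h_1$ by clause~(i), and $(h_1,\mu(m)) R_c (h_1,h_2)$ from $\mu(m) R_m h_2$ by clause~(ii); transitivity of $R_c$ then gives $(\mu(f),\mu(m)) R_c (h_1,h_2)$, contradicting the strict preference in the block. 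Hence at least one member strictly improves, and since the two cases are symmetric in Definitions~\ref{def1} and~\ref{def5}, I assume without loss of generality that $h_1 P_f \mu(f)$.

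Since $h_1 P_f \mu(f)$ forces $\mu(f) \neq h_1$, clause~(i) of Definition~\ref{def5} applies, so $h_1$ prefers $f$ to $\mu(h_1)$; in particular $f \notin \mu(h_1)$. From here the argument mirrors Lemma~\ref{HS} with $f$ playing the role of the single doctor: because $f$ proposes individually in DPDA and prefers $h_1$ to his final match, $f$ must have proposed to $h_1$ and been rejected, so Remark~\ref{ton} gives $d' P_{h_1} f$ for every $d' \in \mu(h_1)$, and the rejection forces $h_1$ to be at full capacity, $|\mu(h_1)| = \kappa_{h_1}$. Responsiveness of $P_{h_1}$ then shows that no feasible set $(\mu(h_1)\setminus D'')\cup\{f\}$ can beat $\mu(h_1)$: replacing any removed doctor by the less preferred $f$, together with discarding further acceptable doctors, only worsens the set. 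This contradicts the conclusion that $h_1$ prefers $f$ to $\mu(h_1)$, completing the proof.

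The main obstacle is the first reduction: responsiveness constrains only unilateral changes, so the joint comparison $(h_1,h_2) P_c (\mu(f),\mu(m))$ does not by itself single out which member improves, and the intermediate-pair argument is what rescues it. A secondary point needing care is that it is the full-capacity conclusion, not merely $d' P_{h_1} f$, that rules out $h_1$ filling a vacancy with $f$; this is implicit in the DPDA rejection step and should be made explicit.
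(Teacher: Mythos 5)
Your proof is correct and takes essentially the same route as the paper: first use responsiveness (your intermediate-pair argument, done slightly more explicitly than the paper's) to extract a strict individual improvement for one member of the couple, then apply the single-doctor DPDA rejection argument to that member — the paper cites Lemma~\ref{HS} for this step, while you re-derive it from Remark~\ref{ton} together with the full-capacity observation. The one case you gloss over, $f \notin D_{h_1}$, is harmless: a set containing an unacceptable doctor is ranked below $\emptyset$, so it can never beat the individually rational $\mu(h_1)$.
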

	\begin{proofs}
		Let $\mu$ be the outcome of DPDA. Assume for contradiction that $\mu$ is blocked by $((h_1,h_2),c)$. Let $\mu(f)=h_f$ and $\mu(m)=h_m$. By the definition of a block,  $(h_1,h_2) P_c (h_f,h_m)$. 
		
		Suppose $h_f R_f h_1$ and $h_m R_m h_2$. Since  $(h_1,h_2) \neq (h_f,h_m)$, this means  $(h_f,h_m) P_c (h_1,h_2)$, a contradiction. 		
		Now, suppose $h_1 P_f h_f$ or $h_2 P_m h_m$. Without loss of generality, assume $h_2 P_m h_m$. Since the outcome of DPDA is individually rational, $h_2 \neq \lambda$. Because $h_2 P_m h_m$, by  Remark \ref{ton}, $m$ proposed to $h_2$ at some step of DPDA and got rejected. Since $h_1 \neq h_2$, by Lemma \ref{HS}, we have $\mu(h_2) P_{h_2}  ((\mu(h_2) \setminus \{d\}) \cup m)$ for all $d \in \mu(h_2)$. However, this contradicts the definition of a block.
	\end{proofs}

In what follows, we give a lemma which shows that the outcome of DPDA cannot be blocked by a pair of dummy hospitals and a couple.

\begin{lemma}\label{UC}
	The outcome of DPDA cannot be blocked by $((\lambda,\lambda),c)$ for any $c \in C$.
\end{lemma}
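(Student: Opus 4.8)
The plan is to show that if $((\lambda,\lambda),c)$ blocks $\mu$ for $c=\{f,m\}$, we reach a contradiction with the optimality/individual-rationality of the DPDA outcome. By the definition of a block, $c$ must prefer $(\lambda,\lambda)$ to $\mu(c)=(h_f,h_m)$, i.e.\ $(\lambda,\lambda)P_c(h_f,h_m)$. The key observation is that since $\lambda=\mu(f)$ or $\lambda=\mu(m)$ is permitted and $\lambda$ represents unemployment, the only way $(\lambda,\lambda)$ can be strictly preferred to $(h_f,h_m)$ is if at least one member is assigned, under $\mu$, to a hospital that is worse than being unemployed.

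First I would invoke responsiveness of $P_c$ (Definition~\ref{def1}). If both $h_f R_f \lambda$ and $h_m R_m \lambda$ held, then by the two responsiveness conditions applied coordinate-by-coordinate (first replacing $h_f$ by $\lambda$ in the first coordinate, then $h_m$ by $\lambda$ in the second), we would get $(h_f,h_m)R_c(\lambda,\lambda)$, contradicting the block. Hence without loss of generality $\lambda P_f h_f$, i.e.\ member $f$ strictly prefers being unemployed to $\mu(f)=h_f$.

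Next I would derive a contradiction from individual rationality of the DPDA outcome. Since the outcome of DPDA is individually rational, we have $\mu(f)R_f\lambda$ for the marginal preference $P_f$ induced by $P_c$; in the couples setting this follows because each member proposes according to his or her individual preference and never ends up at a hospital strictly worse than $\lambda$ (a doctor who would prefer $\lambda$ simply stops proposing). This directly contradicts $\lambda P_f h_f=\mu(f)$.

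The main obstacle I anticipate is making the individual-rationality step fully precise for couples: the text states individual rationality of DPDA in terms of couples via $\mu(c)R_c(\lambda,\lambda)$, and I must be careful to connect this couple-level statement to the coordinate-wise claim $\mu(f)R_f\lambda$. The clean route is to argue directly from the structure of DPDA, in which $f$ proposes according to $P_f$ and is only ever matched to hospitals he finds acceptable (weakly preferred to $\lambda$), so that $\mu(f)R_f\lambda$ and symmetrically $\mu(m)R_m\lambda$ both hold; responsiveness then yields $(h_f,h_m)R_c(\lambda,\lambda)$ outright, contradicting the assumed block. I would present the argument in this form to avoid any ambiguity in reducing couple-level individual rationality to its marginals.
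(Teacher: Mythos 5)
Your proposal is correct and follows essentially the same route as the paper: both reduce the block $(\lambda,\lambda)P_c(h_f,h_m)$ via responsiveness to the statement $\lambda P_x \mu(x)$ for some member $x$ of the couple, and both refute this from the structure of DPDA, since a doctor proposing in order of individual preference would have proposed to $\lambda$ before $\mu(x)$ and $\lambda$ never rejects anyone. Your closing reformulation (establishing $\mu(f)R_f\lambda$ and $\mu(m)R_m\lambda$ directly and then applying responsiveness) is just the contrapositive arrangement of the same two ingredients, and your care in avoiding a circular appeal to couple-level individual rationality matches what the paper implicitly does.
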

\begin{proofs}
	Let $\mu$ be the outcome of DPDA. Assume for contradiction that $\mu$ is blocked by $((\lambda,\lambda),c)$. Let$\mu(f)=h_f$ and $\mu(m)=h_m$. By the definition of block,  $(\lambda,\lambda)P_c(h_f,h_m)$. By responsiveness, this means  $\lambda P_x h_x $ for some $x \in \{f,m\}$. However, by the definition of DPDA, $\lambda P_x h_x$ means $x$ proposed to $\lambda$ before proposing to $h_x$ and got rejected. This is contradiction as, by our assumption, $\lambda$  cannot reject a doctor.
\end{proofs}

In the following lemmas, we give conditions when an outcome of DPDA cannot be blocked by a pair of same hospitals and a couple. 

\begin{lemma}\label{No1}
	Suppose $\mu$ is an outcome of DPDA. Then for any $h \in H$ and any $c=\{f,m\} \in C$,
	\begin{itemize}
		\item[(i)] $f P_h m$ and $h P_f \mu(f)$ implies $((h,h),c)$ cannot block $\mu$, and
		\item[(ii)] $m P_h f$ and $h P_m \mu(m)$ implies $((h,h),c)$ cannot block $\mu$.
	\end{itemize}
\end{lemma}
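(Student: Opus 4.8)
The plan is to prove part (i); part (ii) is obtained verbatim by interchanging the roles of $f$ and $m$ (replacing the hypotheses $f P_h m$, $h P_f \mu(f)$ by $m P_h f$, $h P_m \mu(m)$). I would argue by contradiction: suppose $((h,h),c)$ blocks $\mu$. Since here $h_f = h_m = h$, the only relevant blocking requirement is clause (iii) of Definition \ref{def5}, namely that $h$ prefers $\{f,m\}$ to $\mu(h)$. I will contradict this using only the two stated hypotheses; in particular the couple's preference $P_c$ never enters the argument, so the block already fails on the hospital's side.

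First I would record what the hypotheses say about $h$'s ranking of its current doctors. From $h P_f \mu(f)$ we get $\mu(f) \neq h$, hence $f \notin \mu(h)$, and Remark \ref{ton} gives $d' P_h f$ for every $d' \in \mu(h)$. Combining this with $f P_h m$ and transitivity yields $d' P_h m$ for every $d' \in \mu(h)$ as well, so every doctor $h$ currently holds is strictly better than both members of $c$. (If $f \notin D_h$ then any set containing $f$ is $P_h$-dominated by $\emptyset$, so clause (iii) fails at once; hence I may assume $f,m \in D_h$.)

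The crucial structural point is that $\mu(h)$ must be full, i.e.\ $|\mu(h)| = \kappa_h$. Indeed, $h P_f \mu(f)$ forces $f$ to have proposed to $h$ during DPDA and been rejected; an \emph{acceptable} doctor is rejected only when $h$ is already at capacity, and in DPDA the number of doctors held by $h$ is non-decreasing, so once $h$ is full it stays full and $|\mu(h)| = \kappa_h$. I expect this to be the main obstacle, since it is exactly what excludes the trivial block: were there a vacant position, $h$ could simply absorb $f$ and $m$ and would genuinely prefer $\{f,m\}$. With fullness established, producing $\{f,m\}$ from $\mu(h)$ requires discarding a set $D'' \subseteq \mu(h)$ with $\{f,m\}\cap D''=\emptyset$ and $|D''|\geq 2$, or $|D''|\geq 1$ in the degenerate case $m\in\mu(h)$ (where only $f$ is genuinely added).

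Finally I would show $\mu(h)\,P_h\,\bigl((\mu(h)\setminus D'')\cup\{f,m\}\bigr)$ for every admissible $D''$, contradicting clause (iii). Writing $T=(\mu(h)\setminus D'')\cup\{f,m\}$ and fixing an enumeration of $D''$, I replace the added doctors by members of $D''$ one coordinate at a time: each single swap exchanges $f$ or $m$ for a strictly $\bar P_h$-better doctor of $D''$ and so strictly improves the set, and if $|D''|$ exceeds the number of doctors actually added, the remaining elements of $D''$ are reinserted into the now non-full set, again a strict improvement. Both the single-coordinate monotonicity and the ``adding an acceptable doctor to a non-full set strictly improves it'' facts are routine consequences of clauses (i), (ii) and (iii) of responsiveness, each proved by induction on the size of the common part. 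Chaining these strict improvements gives $\mu(h)\,P_h\,T$, so $h$ does not prefer $\{f,m\}$ to $\mu(h)$, and therefore $((h,h),c)$ cannot block $\mu$.
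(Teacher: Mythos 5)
Your proof is correct and follows essentially the same route as the paper's: apply Remark \ref{ton} to get that every doctor in $\mu(h)$ is ranked above $f$ (hence above $m$, using $f P_h m$ and transitivity), then use responsiveness to conclude that $h$ prefers $\mu(h)$ to any set obtained by inserting $\{f,m\}$, so clause (iii) of the blocking definition fails. The paper's version is much terser---it leaves the fullness of $\mu(h)$, the $f \notin D_h$ edge case, and the chain of single-swap responsiveness comparisons implicit---so your extra steps merely make explicit what the paper takes for granted.
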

\begin{proofs}
	Assume without loss of generality that $f P_h m$ and $h P_f \mu(f)$. By the definition of DPDA, $f$ proposed to $h$ and got rejected. By Remark \ref{ton}, $d P_h f$ for all $d \in \mu(h)$. Because $f P_h m$,  $\{d,d'\} P_h \{f,m\}$ for all $d,d' \in \mu(h)$. This means $h$ cannot block $\mu$ with $c$. This completes the proof.
\end{proofs}

\begin{lemma}\label{No2}
		Suppose $\mu$ is an outcome of DPDA. Then for any $h \in H$ and any $c=\{f,m\} \in C$, $((h,h),c)$ cannot block $\mu$ if $h=\mu(x)$ for some $x \in \{f,m\}$
\end{lemma}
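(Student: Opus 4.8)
The plan is to derive a contradiction from the assumption that $((h,h),c)$ blocks $\mu$ while one member of $c$ is already assigned to $h$. Without loss of generality I would take $x=f$, so that $h=\mu(f)$ and hence $f\in\mu(h)$. Since the two hospitals in the blocking pair coincide ($h_f=h_m=h$), the clauses (i) and (ii) of Definition~\ref{def5} are vacuous, and the only hospital-side requirement for a block is clause (iii): $h$ must prefer the set $\{f,m\}$ to $\mu(h)$.

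The key observation is that this is impossible once $f\in\mu(h)$. Recall the convention recorded just before Definition~\ref{def5}: if a hospital prefers a set of doctors to its current assignment, then it is not matched with any doctor in that set. Applying this to the set $\{f,m\}$, the requirement that $h$ prefer $\{f,m\}$ to $\mu(h)$ forces $f\notin\mu(h)$ and $m\notin\mu(h)$, directly contradicting $f\in\mu(h)$. Hence clause (iii) cannot hold and $((h,h),c)$ does not block $\mu$.

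If one prefers an argument appealing only to the formal definition of ``prefers'', I would instead route through responsiveness. Because $\mu(f)=h$, we have $\mu(c)=(h,\mu(m))$, so the couple's preference for $(h,h)$ over $\mu(c)$ reduces, by clause (ii) of Definition~\ref{def1}, to $h\,P_m\,\mu(m)$; in particular $m\notin\mu(h)$. Then Remark~\ref{ton} gives $d\,\bar P_h\,m$ for every $d\in\mu(h)$. For any admissible $D''\subseteq\mu(h)$ with $\{f,m\}\cap D''=\emptyset$ we have $f\in\mu(h)\setminus D''$, so $(\mu(h)\setminus D'')\cup\{f,m\}=(\mu(h)\setminus D'')\cup\{m\}$; replacing the doctors of $D''$, each preferred to $m$, by the single less-preferred doctor $m$ cannot improve on $\mu(h)$ by responsiveness. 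The residual case $D''=\emptyset$ is excluded because a vacancy at $h$ together with $m$ being acceptable would contradict the rejection of $m$ that $h\,P_m\,\mu(m)$ implies through Remark~\ref{ton}. Either way $h$ does not prefer $\{f,m\}$ to $\mu(h)$, so no block occurs.

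The main obstacle is precisely this last, bookkeeping-heavy step: one must interpret ``$h$ prefers $\{f,m\}$'' correctly when $f$ is already present in $\mu(h)$, and the responsiveness comparison must separately treat the empty and nonempty choices of $D''$. Once the convention preceding Definition~\ref{def5} is invoked, however, the statement is immediate, so I expect the short first argument to be the intended proof.
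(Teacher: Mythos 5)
Your fallback argument (the one routed through responsiveness and Remark~\ref{ton}) is essentially the paper's own proof: the paper derives $h\,P_m\,\mu(m)$ from responsiveness and $\mu(f)=h$, concludes that $m$ proposed to $h$ during DPDA and was rejected, hence every $d\in\mu(h)$ is preferred by $h$ to $m$, and therefore replacing any doctor of $\mu(h)$ by $m$ makes $h$ worse off, so clause (iii) of Definition~\ref{def5} cannot hold. If anything, your version is more careful than the paper's, since you treat the cases $D''\neq\emptyset$ and $D''=\emptyset$ separately, while the paper records only the single-replacement comparison $\mu(h)\,P_h\,((\mu(h)\setminus\{d\})\cup m)$ and leaves the rest implicit.

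However, your primary argument --- the one you expect to be the intended proof --- does not work, because it rests on a misreading of the sentence preceding Definition~\ref{def5}. If that sentence were a definitional requirement that the proposed set $D'$ be disjoint from $\mu(h)$, then clause (iii) could never be satisfied once a couple member sits at $h$, and the lemma would hold for \emph{every} matching, making the hypothesis that $\mu$ is a DPDA outcome (and the paper's entire proof) redundant; that alone should be a warning sign. More decisively, that reading contradicts the paper's first remark, stated immediately after Definition~\ref{def5}: it asserts that $(h_x,x)$ blocks $\mu$ whenever $h_x\,P_x\,\mu(x)$ and $((\mu(h_x)\setminus d)\cup\{x\})\,P_{h_x}\,\mu(h_x)$ for some $d\in\mu(h_x)\setminus\{f,m\}$, and the carve-out $d\notin\{f,m\}$ is there precisely because the other member of the couple may already belong to $\mu(h_x)$ --- in which case ``$(h_x,x)$ blocks'' is by the paper's own convention the block $((h,h),c)$ with $h=\mu(f)$ (or $\mu(m)$). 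So in the paper's semantics a couple with one member at $h$ \emph{can} in principle block with $(h,h)$, the outside member joining and displacing someone; the content of Lemma~\ref{No2} is exactly that DPDA rules this out, because the outside member must have been rejected by $h$ earlier. Keep the responsiveness/Remark~\ref{ton} argument and discard the first one.
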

\begin{proofs}
	Without loss of generality, let $h=\mu(f)$. Let $((h,h),c)$ block $\mu$. 
	
	By the definition of a block, $(h,h)P_c(\mu(f),\mu(m))$. Thus, by responsiveness of couples' preferences, $h P^0_m h_m$. By the definition of DPDA, this means $m$ proposed to $h$ and got rejected. Since $f \in \mu(h)$, by Lemma \ref{HS} and the definition of block, we have $\mu(h) P_{h}  ((\mu(h) \setminus \{d\}) \cup m)$ for all $d \in \mu(h)$. This means $h$ will not block $\mu$ with $c$, which is a contradiction.
\end{proofs}

\begin{remark}\label{new1}
	If a doctor $d \in D$ and hospital $h \in H$ are each other's top ranked alternative, then trivially for a stable match, they must be matched to each other.
\end{remark}

\section{Stable matching is not guaranteed at arbitrary responsive profiles} \label{section3}
In this section, we show by means of two examples that existence of a stable matching is not guaranteed at arbitrary  responsive preference profiles. The two examples are slightly different to suit the subsequent sections.

\begin{example}\label{ex}\normalfont
	Let $H= \{h_1,h_2,h_3\}$,  $\kappa_{h}=2$ for all $h \in \{h_1,h_2,h_3\}$, and  $D=\{d_1,d_2,d_3,d_4,f,m\}$ where  $c=\{f,m\}$ is a couple. 
	
	Suppose $r_1(P_{d_1})=h_2$ and $r_2(P_{d_1})=r_1(P_{d_2})=h_1$. Further, $r_1(P_{d_3})=h_2$ and $r_1(P_{d_4})=h_3$. For the couple, suppose $h_2 P_f h_1$, $h_2 P_f h_3$ and $h_1 P_m h_3$ but $(h_1,h_1)P_c(h_2,h_3)$. Finally $h P_x \lambda$ for all $x \in \{f,m\}$ and all $h \in H$. 
	
	The above mentioned preferences along with the preferences of all the hospitals over individuals are given in Table \ref{tb1}. The preference of all hospitals over pairs of doctors can be any responsive preference over pairs of doctors. However $\{f,m\}P_{h_1} \{d_1,d_2\}$.
	
	The couple's preference over pairs of hospitals, where one member is
	matched and the other one is unmatched are assumed to be responsive and ranked below the pairs of hospitals. 
		
	\begin{table}[!hbt]
		\centering
		\begin{tabular}{c c c c c c c c c c}
			\hline
			$P_{d_1}$ & $P_{d_2}$ & $P_{d_3}$ & $P_{d_4}$& $P_f$ & $P_m$ & $P_c$ & $\bar{P}_{h_1}$ & $\bar{P}_{h_2}$ & $\bar{P}_{h_3}$\\
			\hline
			$h_2$ & $h_1$ & $h_2$ & $h_3$ & $\vdots$ & $\vdots$ & $\vdots$ & $d_3$ & $d_3$ & $d_4$ \\  
			$h_1$ & $\vdots$ & $\vdots$ & $\vdots$ & $h_2$ & $h_1$ & $(h_1,h_1)$ & $d_4$ & $d_4$ & $d_3$ \\
			$\vdots$ & & & & $\vdots$ & $\vdots$ & $\vdots$ & $f$ & $f$ & $d_1$ \\
			& & & & $h_1$ & $h_3$ & $(h_2,h_3)$ & $d_1$ & $d_1$ & $f$ \\
			& & & & $\vdots$ & $\vdots$ & $\vdots$ & $d_2$ & $m$ & $m$ \\
			& & & & & & & $m$ & $d_2$ & $d_2$\\
			\hline
		\end{tabular}
		\caption{}
		\label{tb1}
	\end{table}
	
    The couple and $h_1$ have responsive preferences.
	In what follows, we argue that there is no stable matching for the preference profile given in Table \ref{tb1}. 		
	
	Assume for contradiction, that there exists a stable matching $\mu$ for the given preference profile. By Remark \ref{new1}, it must be that $\mu(d_3)=h_2$ and $\mu(d_4)=h_3$, as the doctors and hospitals are each other's top ranked alternative. Thus, there is potentially only one vacancy to be filled in $h_2$ and $h_3$ respectively. 
	
Particularly, it is not possible for the couple to be matched to $(h_2,h_2)$ and $(h_3,h_3)$ for a stable matching. Also, since the couple prefers to be matched with any two hospitals than having at least one member of the couple unmatched and we have sufficiently many number of vacancies in all the three hospitals, thus, $\{\mu(f), \mu(m)\} \subseteq \{h_1,h_2,h_3\}$. 

Now we look at the following allocations of the couple in $\mu$.
	
	\begin{itemize}
		\item[(i)] Suppose $\mu(c)=(h_2,h_1)$. \\
		Note that $f P_{h_2} d_1$. Since $d_i \bar{P}_{h_1} m$ for $i \in \{1,2\}$, $r_2(P_{d_1})=r_1(P_{d_2})=h_1$. So, $\mu(h_2)=\{d_3,f\}$. As $\kappa_{h_1}=2$, therefore $d_i \notin \mu(h_1)$ for some $i$. Thus, stability of $\mu$ implies that $(h_1,d_i)$ blocks $\mu$.
		\item[(ii)] Suppose $\mu(c)=(h_1,h_1)$. \\
	Then $(h_1,d_2)$ blocks $\mu$.
		\item[(iii)] Suppose $\mu(c)=(h_2,h_3)$.\\
		 Since $\{f,m\} P_{h_1} \{d_1,d_2\}$ and $(h_1,h_1) P_c (h_2,h_3)$, $((h_1,h_1),c)$ blocks $\mu$.
		\item[(iv)] Suppose $\mu(c)=(h_1,h_3)$. \\
		Since $h_2 P_f h_1$, responsiveness implies $(h_2,h_3) P_c (h_1,h_3)$. This together with the fact that $f \bar{P}_{h_2} d_1 \bar{P}_{h_2} d_2$ implies $((h_2,h_3),c)$ blocks $\mu$.
		\item[(v)] Suppose $\mu(c)=(h_3,h_1)$. \\
		Since $h_2 P_f h_3$, responsiveness implies $(h_2,h_1) P_c (h_3,h_1)$. This together with the fact that $f \bar{P}_{h_2} d_1 \bar{P}_{h_2} d_2$ implies $((h_3,h_1),c)$ blocks $\mu$.
		\item[(vi)] Suppose $\mu(c)=(h,h_2)$ for some $h \in \{h_1,h_3\}$. \\
		Thus, $d_1 \notin \mu(h_2)$. Since $r_1(P_{d_1})=h_2$ and $d_1 P_{h_2} m$, stability of $\mu$ implies that $(h_2,d_1)$ blocks $\mu$.
	\end{itemize}
		Cases (i)-(vi) together are exhaustive. Thus, it follows that there is no stable matching for the given preference profile in Table \ref{tb1}.
		
\end{example}

Note that, the arguments of  this example can also be used to give an example with no stable matching when we consider $h_2$ and $h_3$ as the same hospital. Thus, we just consider two vacancies for both the hospitals while excluding $d_3$ and $d_4$ from the set of doctors. It follows from the above argument that there does not exist a stable matching when $(h_1,h_1)P_c(h_2,h_2)$ with $h_2 P_f h_1$, $h_1 P_m h_2$ and the preferences of hospitals over $\{f,m,d_1,d_2\}$ as given by $\bar{P}_{h_1}$ and $\bar{P}_{h_2}$.

\begin{example}\label{ex2}\normalfont
	Let $H= \{h_1\}$,  $\kappa_{h_1}=2$ and  $D=\{d_1,d_2,f,m\}$ where  $c=\{f,m\}$ is a couple. 
	
	The preference of the hospital over individuals and pairs of individuals, preference of individual doctors and the couple preferences are given in Table \ref{tb3}.

	The preference of $h_1$ over the pairs of doctors where at least one position is vacant is assumed to ranked below the shown pairs.
	
	\begin{table}[!hbt]
		\centering
		\begin{tabular}{c c c c c c c c}
			\hline
			$\bar{P}_{h_1}$ & $P_{d_1}$ & $P_{d_2}$ & $P_{f}$ & $P_{m}$ & $P_c$ & $P_{h_1}$\\
			\hline
			$f$ & $h_1$ & $h_1$ & $\lambda$ & $h_1$ & $(\lambda,h_1)$ & $\{f,d_1\}$ \\
			$d_1$ & $\lambda$ & $\lambda$ & $h_1$ & $\lambda$ & $(h_1,h_1)$  & $\{f,d_2\}$\\
			$d_2$ & & & & & $(\lambda,\lambda)$ & $\{f,m\}$\\
			$m$ &  &  & & & $(h_1,\lambda)$ & $\{d_1,d_2\}$ \\
			& & & & & & $\{d_1,m\}$\\ 
			& & & & & & $\{d_2,m\}$\\
			\hline
		\end{tabular}
		\caption{}
		\label{tb3}
	\end{table}
	
	Note that the couple and $h_1$ have responsive preferences
	In what follows, we argue that there is no stable matching for the preference profile given in Table \ref{tb3}. 		
	
	Assume for contradiction, that there exists a stable matching $\mu$ for the given preference profile. We look at the following allocations of the couple in $\mu$.
	
	\begin{itemize}
		\item[(i)] Suppose $\mu(c)=(\lambda,h_1)$. \\
		Since $d_i \bar{P}_{h_1} m$, $r_1(d_i) = h_1$ for $i \in \{1,2\}$, and $\kappa_{h_1}=2$, therefore $d_i \notin \mu(h_1)$ for some $i$. Thus, stability of $\mu$ implies that $(h_1,d_i)$ blocks $\mu$.
		\item[(ii)] Suppose $\mu(c)=(\lambda,\lambda)$.\\
		Since $\{f,m\} P_{h_1} \{d_1,d_2\}$ and $(h_1,h_1) P_c (\lambda,\lambda)$, $((h_1,h_1),c)$ blocks $\mu$.
		\item[(iii)] Suppose $\mu(c)=(h_1,h)$ for some $h \in \{h_1,\lambda\}$. \\
		Since $r_1(P_f) = \lambda$, it follows by responsiveness that $((\lambda,h),c)$ blocks $\mu$.
	\end{itemize}
	Since cases (i)-(iii) are exhaustive, it follows that there is no stable matching for the given preference profile in Table \ref{tb3}.

\end{example}

\section{Conditions on couples' preferences for stability} \label{sec3} 
In view of Example \ref{ex}, we look for necessary and sufficient  conditions on couples' preferences that guarantee the existence of a stable matching for every profile.
		
Let $P^0_{C}=(\{{P}^0_c \}_{c \in {C}})$ be  a given profile of preferences of the couples. Thus, for any $c=\{f,m\} \in C$, and a given couple preference $P^0_c$, $P^0_f$ and $P^0_m$ denote the individual preferences of $f$ and $m$ respectively. Since preferences of couples are responsive, a couple preference $P^0_c$ uniquely determines the individual preferences $P^0_f$ and $P^0_m$ of the members of the couple. In what follows, we present a condition on $P^0_C$ called extreme-altruism. 
		
	\begin{definition}\label{alt}
		A profile of couple preferences $P^0_C$ is said to satisfy extreme-altruism if for all $c=\{f,m\} \in C$, all $h \in H$ and all $h',h'' \in H \cup \{\lambda\}$: 
		\begin{itemize}
			\item[(i)] $h'P^0_f h$, $h'R^0_f h''$, $h'' R^0_m \lambda$ and $\kappa_h \leq |D| -2$ imply $(h',h'') P^0_c (h,h)$, and 
			\item[(ii)] $h'P^0_m h$, $h'R^0_m h''$, $h'' R^0_f \lambda$ and $\kappa_h \leq |D| -2$ imply $(h'',h') P^0_c (h,h)$.
		\end{itemize}
		
	\end{definition}

For illustration of extreme-altruism, consider a couple $c=\{f,m\}$ and hospitals $h \in H$ and $h',h'' \in H \cup \{\lambda\}$. Suppose $f$ strictly prefers $h'$ to $h$ and weakly  prefers $h'$ to $h''$, and $m$ weakly prefers $h''$ to $\lambda$. Then extreme-altruism says that, if $h$ does not have sufficiently large capacity, then $c$ ranks $(h',h'')$ above $(h,h)$. Note here, that if $h$ has a sufficiently large capacity, then we would not require any restriction on couples' preferences, simply because the hospital $h$ will have enough vacancies to accommodate at least one member of the couple, thereby removing the possibility of the couple to block with the hospital.

Out of two allocations of  a couple, one where both the members are allocated at the same hospital and the other where one member moves to a better hospital (according to his/her individual preference), the couple prefers the latter to the former. For instance, if the hospital $h'$ gives a substantially better salary compared to $h$. Then the couple would rather prefer on member to be at $h'$ than both the members of the couple to be at $h$. 

Our next lemma shows that extreme-altruism and responsiveness together imply that if two hospitals are acceptable for both members of a couple, then the ranking of those two hospitals by each member of the couple is always the same. That is, if $h_1$ and $h_2$ are both acceptable by each member of a couple  $\{f,m\}$, then $f$ and $m$ will have same ranking over $h_1$ and $h_2$.

\begin{lemma}\label{same}
	Let $P^0_C$ be a profile of couple preferences satisfying extreme-altruism. Then, for any $c =\{f,m\}$ and any $h_1, h_2 \in H$ such that $h P^0_x \lambda$ for all $x \in \{f,m\}$ and all $h \in \{h_1,h_2\}$, we have $h_1 P^0_f h_2$ if and only if $h_1 P^0_m h_2$. 
\end{lemma}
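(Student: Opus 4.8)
The plan is to argue by contradiction, exploiting the fact that extreme-altruism is precisely what orders the two ``diagonal'' allocations $(h_1,h_1)$ and $(h_2,h_2)$, which responsiveness alone leaves incomparable. First I would dispose of the trivial case $h_1=h_2$, in which both sides of the biconditional are false, and so assume $h_1\neq h_2$. Since $h_1$ and $h_2$ are both acceptable to each member and each individual preference is a linear order, $f$ and $m$ each rank $h_1$ and $h_2$ strictly. Suppose, toward a contradiction, that they disagree; after relabelling we may assume $h_1 P^0_f h_2$ while $h_2 P^0_m h_1$.

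Next I would feed this disagreement into extreme-altruism twice, once in each direction. Applying part (i) with $h'=h_1$, $h=h_2$ and $h''=h_1$, the hypotheses $h_1 P^0_f h_2$, $h_1 R^0_f h_1$ (reflexivity) and $h_1 R^0_m \lambda$ (acceptability of $h_1$ to $m$), together with $\kappa_{h_2}\le |D|-2$, yield $(h_1,h_1) P^0_c (h_2,h_2)$. Symmetrically, applying part (ii) with $h'=h_2$, $h=h_1$ and $h''=h_2$, the hypotheses $h_2 P^0_m h_1$, $h_2 R^0_m h_2$ and $h_2 R^0_f \lambda$, together with $\kappa_{h_1}\le|D|-2$, yield $(h_2,h_2) P^0_c (h_1,h_1)$. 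Because $P^0_c$ is a linear order and $h_1\neq h_2$, the two strict rankings $(h_1,h_1) P^0_c (h_2,h_2)$ and $(h_2,h_2) P^0_c (h_1,h_1)$ are mutually contradictory, and the lemma follows.

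The point that needs care, and the step I regard as the crux, is that a single invocation of extreme-altruism does \emph{not} suffice: responsiveness places $(h_1,h_2)$ strictly above and $(h_2,h_1)$ strictly below both diagonal allocations but imposes no order between $(h_1,h_1)$ and $(h_2,h_2)$, so each of the two applications above is on its own perfectly consistent with the profile; it is only their conjunction that collapses. The secondary thing to verify is that the capacity premise $\kappa_h\le |D|-2$ is available for $h\in\{h_1,h_2\}$, on which the two applications rely; I would invoke the standing capacity bounds here, noting that for a hospital with capacity exceeding $|D|-2$ the corresponding instance of extreme-altruism is vacuous, so in that regime the claim would genuinely require the bound rather than follow from responsiveness alone.
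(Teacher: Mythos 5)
Your proof is correct and shares the paper's core strategy---assume the two members disagree on $h_1,h_2$ and derive a contradiction from two dual applications of extreme-altruism---but your instantiation is actually cleaner than the paper's. The paper takes $h''=\lambda$ in both applications, obtaining $(h_1,\lambda)\,P^0_c\,(h_2,h_2)$ and $(\lambda,h_2)\,P^0_c\,(h_1,h_1)$, and then needs responsiveness to close a four-step cycle $(h_1,\lambda)\,P^0_c\,(h_2,h_2)\,P^0_c\,(\lambda,h_2)\,P^0_c\,(h_1,h_1)\,P^0_c\,(h_1,\lambda)$. You instead take $h''=h'$, which pits the two diagonal allocations directly against each other, yielding $(h_1,h_1)\,P^0_c\,(h_2,h_2)$ and $(h_2,h_2)\,P^0_c\,(h_1,h_1)$, so asymmetry of the strict relation finishes the argument without invoking responsiveness at all; this is what your ``crux'' paragraph correctly identifies. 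One caveat on the capacity point you flag: there are no standing bounds to invoke---the paper assumes only $\kappa_h\ge 2$, not $\kappa_h\le|D|-2$---so the capacity premise is genuinely unavailable in general. However, the paper's own proof silently assumes it as well (its two applications of extreme-altruism require it just as yours do), so this is a gap in the lemma as stated, shared by both arguments, rather than a defect of your proof relative to the paper's; if anything, you are the only one who notices that without the bound the relevant instances of extreme-altruism become vacuous.
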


\begin{proofs}	
	Let $P^0_C$ be a profile of couple preferences satisfying extreme-altruism. Consider a couple $c=\{f,m\}$ with preference $P^0_c$ and consider two  hospitals $h_1,h_2 \in H$. Assume for contradiction  $h_1 P^0_f h_2 P^0_f \lambda$ and $h_2 P^0_m h_1 P^0_m \lambda$. By responsiveness and Condition (i)  of the definition of extreme-altruism, we have  $(h_1,\lambda) P^0_c (h_2,h_2) P^0_c(\lambda,h_2)$. Again, responsiveness and Condition (ii) of extreme-altruism implies $(\lambda,h_2) P^0_c (h_1,h_1) P^0_c(h_1,\lambda)$. However, this is a contradiction. 
	\end{proofs}

Our next corollary follows directly from Lemma \ref{same}. It says the following. Suppose $P^0_C$ satisfies the extreme-altruism. Consider a couple. Suppose that every hospital is acceptable for each member of the couple. Then, the members of the couple have the same individual preference over $H$.

\begin{cor}
	Let $P^0_C$ be a profile of couple preferences satisfying  extreme-altruism. Let $c=\{f,m\} \in C$ be a couple such that $h P^0_f \lambda$ and $h P^0_m \lambda$ for all $h \in H$. Then $P^0_m = P^0_f$. 
\end{cor}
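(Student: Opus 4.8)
The plan is to derive this directly from Lemma \ref{same}, handling comparisons among hospitals and comparisons involving the dummy hospital $\lambda$ separately. Since $P^0_f$ and $P^0_m$ are both linear orders on $H \cup \{\lambda\}$, it suffices to show that they agree on every pairwise comparison of distinct elements $x, y \in H \cup \{\lambda\}$; once pairwise agreement is established, equality of the two linear orders follows at once.

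First I would dispose of pairs of hospitals. The hypothesis states that $h P^0_f \lambda$ and $h P^0_m \lambda$ for every $h \in H$, so every hospital is acceptable to both members of the couple. Consequently, for any two hospitals $h_1, h_2 \in H$, the precondition of Lemma \ref{same} --- namely that $h P^0_x \lambda$ for all $x \in \{f,m\}$ and all $h \in \{h_1, h_2\}$ --- is automatically satisfied. Lemma \ref{same} then yields $h_1 P^0_f h_2$ if and only if $h_1 P^0_m h_2$, so $P^0_f$ and $P^0_m$ induce the same ordering on $H$.

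Next I would address the remaining pairs, each of which involves $\lambda$. For any $h \in H$, the hypothesis directly gives $h P^0_f \lambda$ and $h P^0_m \lambda$, so both members rank every hospital strictly above $\lambda$; equivalently, $\lambda$ is the bottom element of both $P^0_f$ and $P^0_m$. Hence the two preferences also agree on every comparison between a hospital and $\lambda$.

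Combining the two cases, $P^0_f$ and $P^0_m$ coincide on all pairwise comparisons over $H \cup \{\lambda\}$, and therefore $P^0_m = P^0_f$. I do not anticipate a substantive obstacle: the content is entirely carried by Lemma \ref{same}, and the only point requiring care is noting that the blanket acceptability hypothesis of the corollary activates that lemma for every pair of hospitals simultaneously, while the comparisons involving $\lambda$ are pinned down directly by the hypothesis.
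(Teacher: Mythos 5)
Your proof is correct and follows exactly the route the paper intends: the paper states the corollary ``follows directly from Lemma \ref{same}'' without further detail, and your argument is precisely the elaboration of that remark --- applying the lemma to every pair of hospitals (the blanket acceptability hypothesis activates it uniformly) and handling comparisons with $\lambda$ directly from the hypothesis. Nothing is missing; the decomposition into hospital--hospital and hospital--$\lambda$ comparisons, plus the fact that linear orders agreeing on all pairs coincide, closes the argument.
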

	
	For a profile of preferences $P^0_{C}$ of the couples, an extension of $P^0_{C}$ is defined as a preference profile $P=(\{P_d\}_{d \in D}, \{P_c\}_{c \in C},\{P_h\}_{h \in H})$ such that $P_c = P^0_c$ for all $c \in C$. 
		
The following theorem says that extreme-altruism of $P^0_C$ is necessary and sufficient for the existence of a stable matching for \emph{every} extension of $P^0_{C}$. 
		
\begin{theorem}\label{theo3}\begin{itemize}
		\item[(i)]  If $P^0_{C}$ satisfies extreme-altruism, then a stable matching exists for any extension  of $P^0_{C}$.
		\item[(ii)] If $P^0_{C}$ does not satisfy extreme-altruism for all $h \in H$, then there exists an extension of $P^0_{C}$ with no stable matching.			
		\end{itemize}
\end{theorem}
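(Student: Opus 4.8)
# Proof Proposal for Theorem \ref{theo3}

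The plan is to prove the two directions separately, treating the sufficiency claim (i) as the substantial part and the necessity claim (ii) as a more direct construction leveraging the examples already developed.

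For direction (i), I would run DPDA on an arbitrary extension $P$ of $P^0_C$ and argue that its outcome $\mu$ is stable. By Lemmas \ref{HS}, \ref{HC}, and \ref{UC}, the outcome already resists blocking by any single doctor-hospital pair, by any couple with two distinct real hospitals, and by any couple with the pair $(\lambda,\lambda)$. The only remaining blocking threats are from pairs of the form $((h,h),c)$ with $h \in H$ (a couple wanting to join the same hospital together). The heart of the proof is to show extreme-altruism rules these out. Suppose toward contradiction that $((h,h),c)$ blocks $\mu$ for $c=\{f,m\}$. By Lemma \ref{No2}, neither $f$ nor $m$ is already matched to $h$, and by Lemma \ref{No1}, we may assume the member $h$ ranks higher among $\{f,m\}$ has \emph{not} successfully remained at $h$---more precisely, the conditions of Lemma \ref{No1} force a configuration where $h P^0_f \mu(f)$ and $h P^0_m \mu(m)$ cannot both drive a block through the naive mechanism. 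The key step is to identify the ``better'' outside hospital $h'$ that DPDA actually assigned to the relevant member and invoke Definition \ref{alt}: since that member strictly prefers their DPDA-assigned hospital $h'$ to $h$ (this is where individual rationality and the doctor-optimality of DPDA enter), extreme-altruism gives $(h',h'')P^0_c (h,h)$ for the appropriate $h''$, contradicting that the couple prefers $(h,h)$ to its current allocation $\mu(c)$.

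The main obstacle in direction (i) is the bookkeeping around the capacity hypothesis $\kappa_h \le |D|-2$ in Definition \ref{alt}: I must verify that whenever $((h,h),c)$ genuinely threatens a block, the hospital $h$ is indeed ``full enough'' that $\kappa_h \le |D|-2$ holds, or else argue directly that a hospital with spare capacity relative to the total doctor pool cannot participate in such a block (because it would simply absorb a member of the couple rather than reject, echoing the explanatory remark after Definition \ref{alt}). I would structure this as a case split on whether $h$ has vacant positions at $\mu$: if it does, the block is impossible because $h$ would have accepted a proposing couple member; if it is full, I extract the needed comparison of $h'$ versus $h$ from the rejection history of DPDA (via Remark \ref{ton}) and feed it into extreme-altruism. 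Assembling the precise instance of Definition \ref{alt} with the correct choice of $h,h',h''$ matching the blocking member's profile is the delicate part.

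For direction (ii), the statement is that if $P^0_C$ fails extreme-altruism then some extension has no stable matching. I would start from a couple $c$ and hospitals witnessing the failure---so $h'P^0_f h$, $h'R^0_f h''$, $h''R^0_m \lambda$, $\kappa_h \le |D|-2$, yet $(h,h)P^0_c(h',h'')$ (or the symmetric condition (ii))---and then build an extension by specifying the hospitals' and single doctors' preferences exactly as in Example \ref{ex} (or the reduced single-hospital variant noted immediately after it, and Example \ref{ex2} for the case involving $\lambda$). The role of those examples is precisely to show that once a couple prefers to be co-located at a weaker hospital over splitting to stronger ones, one can populate the remaining doctors and hospital rankings so that every candidate matching is blocked. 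I would map the abstract failure configuration onto the concrete roles $h_1,h_2,d_1,d_2$ of Example \ref{ex}, using the capacity bound $\kappa_h \le |D|-2$ to guarantee enough auxiliary doctors exist to fill the construction. The step requiring care here is handling the $\lambda$ cases uniformly with the real-hospital cases, which is exactly why Example \ref{ex2} was stated separately; I would invoke whichever example matches whether $h'$ or $h''$ equals $\lambda$.
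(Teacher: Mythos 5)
Your proposal is correct and follows essentially the same route as the paper: for part (i) you run DPDA, use Lemmas \ref{HS}, \ref{HC}, \ref{UC} to reduce to a block of the form $((h,h),c)$, use Lemmas \ref{No1} and \ref{No2} plus responsiveness to pin down the configuration, handle the capacity condition $\kappa_h \le |D|-2$ by a full/vacant case split (logically equivalent to the paper's $\kappa_h > |D|-2$ case, where rejection despite a vacancy forces $m \notin D_h$), and then invoke extreme-altruism---the ``appropriate $h''$'' you leave open is $\lambda$, giving $(h_f,\lambda)P^0_c(h,h)$ and, via $h_m R^0_m \lambda$ and responsiveness, the contradiction $(h_f,h_m)R^0_c(h_f,\lambda)P^0_c(h,h)$. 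For part (ii) you, like the paper, embed the violating configuration into a profile replicating Example \ref{ex} (with Example \ref{ex2} and the merged-hospital variant covering the $\lambda$ and coincident-hospital cases, a point you actually treat more explicitly than the paper does).
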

		
\begin{proof}[Part (i)] Consider a preference profile $P$ that is an arbitrary extension of $P^0_{C}$ where $P^0_{C}$ satisfies extreme-altruism. We show that the DPDA where each member of each couple proposes according to his/her individual preference gives a stable matching for $P$. 

Let $\mu$ be the outcome. Suppose $\mu$ is not stable at $P$. Since DPDA is individually rational,  Lemma \ref{HS}, Lemma \ref{HC} and Lemma \ref{UC} imply that $\mu$ is blocked by $((h,h),c)$ for some $h \in H$ and some $c=\{f,m\} \in C$. Let $\mu(f)=h_f$ and $\mu(m)=h_m$. Thus $(h,h)P^0_c(h_f,h_m)$. 
	
Assume without loss of generality that $f P_h m$. By Lemma \ref{No1} and Lemma \ref{No2}, we know that $h_f P^0_f h$. Thus, by responsiveness, we must have $h P^0_m h_m$. Suppose $\kappa_h > |D| -2$. Then $\{f,m\} \notin \mu(h)$ implies $|\mu(h)| \leq |D|-2 < \kappa_h$. But since $h P^0_m h_m$, $m$ proposed to $h$ at an earlier step of DPDA and got rejected even when $h$ had a vacancy. Thus $m \notin D_h$ which implies that $((h,h),c)$ cannot block $\mu$. Thus $\kappa_h \leq |D| -2$.

By the definition of DPDA, we have $h_f R^0_f \lambda$ and $h_m R^0_m \lambda$. Also, we know that $x$ is weakly preferred to itself. This together with extreme-altruism implies $(h_f,\lambda)P^0_c(h,h)P^0_c(h_f,h_m)$. 

This contradicts the fact that $h_m R^0_m \lambda$. Thus $(h_f,h_m)R^0_c(h_f,\lambda)P^0_c(h,h)$ which contradicts that $((h,h),c)$ blocks $\mu$. This completes the proof of part (i) of Theorem \ref{theo3}. 

\medskip

	\noindent [Part (ii)] Suppose $P^0_C$ does not satisfy extreme-altruism. We show that there exists an extension of $P^0_C$ with no stable matching. 
	
	Since $P^0_C$ does not satisfy extreme-altruism, there is a couple $c=\{f,m\}$, a hospital $h$ such that $\kappa_h \leq |D|-2$ and hospitals $h_1,h_2 \in H \cup \{\lambda\} \setminus \{h\}$  such that:
	\begin{itemize}
		\item[(i)] either, $h_1 P^0_f h$, $h_1 R^0_f h_2$ and $h_2 R^0_m \lambda$, but $(h,h) P^0_c (h_1,h_2)$, 
		\item[(ii)] or, $h_1 P^0_m h$, $h_1 R^0_m h_2$ and $h_2 R^0_f \lambda$, but $(h,h) P^0_c (h_2,h_1)$.
	\end{itemize}
    Assume without loss of generality that (i) holds. 
    
	By responsiveness, $h_1 P^0_f h$ and $(h,h) P^0_c (h_1,h_2)$, implies $h P^0_m h_2$. Consider a preference profile $P$ such that 
	\begin{enumerate}[(i)]
		\item for all $h' \in H \setminus \{h,h_1,h_2\}$,  either $|\{d:d P_{h} c \mbox{ and } r_1(P_d)=h\}|=\kappa_{h'}$ or $f,m \notin D_{h'}$,
		\item there are doctors $d_1,d_2 \in D \setminus \{f,m\}$ such that $f P_h d_1 P_h d_2 P_h m$ and $\{f,m\} P_h \{d_1,d_2\}$,
		\item  $|\{d:d P_{h} f \mbox{ and } r_1(P_d)=h\}|=\kappa_{h}-2$. This is possible since $\kappa_h \leq |D|-2$ implies $\kappa_h -2 \leq |D|-4$,
		\item either $f P_{h_1} d_1 P_{h_1} m$ and $|\{d:d P_{h_1} f \mbox{ and } r_1(P_d)=h_1\}|=\kappa_{h_1} - 1$, or $r_1(P_{h_1})=f$, $r_2(P_{h_1})=d_1$ and $m \notin D_{h_1}$,
		\item either $|\{d:d P_{h_2} c \mbox{ and } r_1(P_d)=h_2\}|=\kappa_{h_2} -1$ or $f \notin D_{h_2}$,
		\item  $r_1(P_{d_1})=h_1$ and $r_2(P_{d_1})=r_1(P_{d_2})=h$.	
	\end{enumerate}
	
	But it trivially follows from this preference profile that for a stable matching $\mu$, $h' \notin \{\mu(f),\mu(m)\}$ for $h' \notin \{h,h_1,h_2\}$. Also, it is not possible that $\mu(c)=(h_i,h_i)$ for $i \in \{1,2\}$. Thus, by our construction, a stable matching exists for this matching problem if and only if there is a stable matching for Example \ref{ex}. However, since there does not exist a stable matching for Example \ref{ex}, thus, we do not have a stable matching for $P$. This completes the proof of part (ii) of Theorem \ref{theo3}.
\end{proof}

From the above example, it is clear that even if all hospitals view all the doctors as acceptable, violation of extreme-altruism can lead to a preference profile with no stable matching, if we have enough doctors to fulfil the capacity constraints as given by points (i),(ii), (iv) and (v). Thus we get the following corollary.

\begin{cor}
	Suppose $D_h=D$ for all $h \in H$. Moreover, $\sum_{h \in H} \kappa_h = |D|$, then:
	\begin{itemize}
		\item[(i)]  If $P^0_{C}$ satisfies extreme-altruism, then a stable matching exists for any extension  of $P^0_{C}$.
		\item[(ii)] If $P^0_{C}$ does not satisfy extreme-altruism for all $h \in H$, then there exists an extension of $P^0_{C}$ with no stable matching.			
	\end{itemize}
\end{cor}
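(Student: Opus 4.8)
The plan is to derive both parts from the corresponding parts of Theorem \ref{theo3}, noting that the two extra hypotheses $D_h=D$ for all $h\in H$ and $\sum_{h\in H}\kappa_h=|D|$ merely restrict the class of matching problems under consideration rather than requiring a new argument.

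For Part (i) I would argue that it is an immediate specialization of Theorem \ref{theo3}(i). Any preference profile $P$ extending $P^0_C$ that additionally satisfies $D_h=D$ for all $h$ is, in particular, an extension of $P^0_C$ in the sense of the definition preceding Theorem \ref{theo3}. Since extreme-altruism of $P^0_C$ guarantees a stable matching for \emph{every} extension, it certainly guarantees one for every extension meeting the extra constraints. Hence nothing beyond invoking Theorem \ref{theo3}(i) is needed.

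For Part (ii) the task is to reproduce the counterexample of Theorem \ref{theo3}(ii) inside a matching problem that respects both extra requirements. I would begin exactly as in that proof: the failure of extreme-altruism supplies a couple $c=\{f,m\}$, a hospital $h$ with $\kappa_h\le|D|-2$, and hospitals $h_1,h_2$ realizing case (i) (or, symmetrically, case (ii)). The essential modification concerns conditions (i), (iv) and (v) of that construction, each of which was stated as an ``either/or'' whose second alternative declared a member of the couple unacceptable to some hospital. To secure $D_h=D$ I would always select the \emph{first} alternative --- the one that fills a hospital's capacity with filler doctors who top-rank that hospital and are ranked above the couple --- so that no doctor is ever rendered unacceptable. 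In any stable matching these filler doctors must occupy their top hospital, since each top-ranks that hospital and is preferred by it to the couple, so leaving the position open would create a blocking pair; they therefore consume exactly the intended capacity and leave the residual problem faced by $c$ identical to that of Example \ref{ex}.

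It remains to fix the global count so that $\sum_{h\in H}\kappa_h=|D|$. Since Example \ref{ex} itself already satisfies $\sum_h\kappa_h=|D|$ and every filler doctor is absorbed by its top hospital, I would choose the capacities of $h,h_1,h_2$ together with the number of filler doctors and of single doctors in $S$ so that total capacity equals the total number of doctors, adding singles to balance the equation while respecting $|M|=|F|$. With all doctors acceptable and all positions exactly filled, the same exhaustive case analysis as in Example \ref{ex} shows that every candidate allocation of $c$ is blocked, so no stable matching exists. The main obstacle I anticipate is precisely this bookkeeping: one must verify that enough filler doctors can be introduced through the first alternatives of (i), (iv) and (v) to fill the capacities while \emph{simultaneously} forcing $\sum_h\kappa_h=|D|$ exactly, and that these filler doctors neither form new blocking pairs nor create a new stable allocation for $c$. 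Pinning each filler doctor to its top hospital in every stable matching --- so that the effective subproblem seen by $c$ is exactly Example \ref{ex} --- is the step requiring the most care.
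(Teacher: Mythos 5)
Your overall route is the same as the paper's: part (i) is indeed nothing more than a specialization of Theorem~\ref{theo3}(i), and for part (ii) the paper likewise re-runs the construction from the proof of Theorem~\ref{theo3}(ii) with the ``filler'' alternatives of conditions (i), (iv) and (v) selected throughout, so that no doctor is ever declared unacceptable. That is exactly your plan, and your observation that the fillers get pinned to their top-ranked hospitals in any stable matching, so that the residual problem faced by $c$ collapses to Example~\ref{ex}, is the intended argument.

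The genuine flaw is in how you handle the bookkeeping, which you both leave as an ``anticipated obstacle'' and propose to resolve by \emph{choosing} the capacities of $h,h_1,h_2$ and the number of filler and single doctors. You do not get to choose these: in the corollary the hospitals, their capacities, and the doctor set are given data satisfying $\sum_{h\in H}\kappa_h=|D|$, and $P^0_C$ is a given violating profile on that data; only the preferences of the single doctors and of the hospitals (with $D_h=D$ forced) may be constructed. Executed literally, your plan proves only that \emph{some} matching problem meeting the hypotheses admits a bad extension, which is weaker than the corollary's claim for every such problem. The repair is that no design freedom is needed, because the count closes automatically: the construction consumes $(\kappa_h-2)+(\kappa_{h_1}-1)+(\kappa_{h_2}-1)+\sum_{h'\notin\{h,h_1,h_2\}}\kappa_{h'}=\sum_{h'\in H}\kappa_{h'}-4=|D|-4$ filler doctors, and the doctors available besides $f,m,d_1,d_2$ number exactly $|D|-4$. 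So the hypothesis $\sum_{h\in H}\kappa_h=|D|$ is precisely the identity that makes the filler-only variant of the construction feasible; it is verified by arithmetic, not arranged. (A residual caveat, which the paper itself also glosses over: the fillers' individual preferences can be freely specified only if they are single doctors, since members of couples other than $c$ have their marginal preferences fixed by $P^0_C$.)
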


\section{Conditions on hospitals' preferences for stability} \label{sec4}
In Section \ref{sec3}, we have discussed a necessary and sufficient condition on couples' preferences that guarantees the existence of a stable matching for every collection of preferences of the hospitals. In this section, we look at the other side of the problem, that is, we look for  necessary and  sufficient condition on hospitals' preferences so that a stable matching exists for every collection of preferences of the doctors (both individuals and couples). 
	
Let $P^0_{H}=(\{P^0_h\}_{h \in H})$ be  a given profile of preferences of the hospitals. In what follows, we introduce the \emph{aversion to couple diversity} property. 
	
\begin{definition}
	A profile of hospital preferences $P^0_H$ is said to have aversion to couple diversity  if for all $h \in H$, all $c=\{f,m\}$ and all  $d_1,d_2 \in D$ with $f,m \in D_h$ such that either (i) $f P^0_h d_1 P^0_h d_2 P^0_h m$ and  $|\{d:d P^0_h m\}| > \kappa_h$, or (ii) $m P^0_h d_1 P^0_h d_2 P^0_h f$ and  $|\{d:d P^0_h f\}| > \kappa_h$, we have $\{d_1,d_2\} P^0_h \{f,m\}$. 
\end{definition}

Consider a hospital $h$ with a preference $P_h$ over acceptable and feasible sets of doctors. Take a couple $c=\{f,m\}$ such that both $f$ and $m$ are acceptable for $h$ but at least one of them is not  amongst the top-$\kappa_h$ doctors according to the restriction of $P_h$ over individual doctors. Suppose that  there are two doctors $d_1,d_2$ who are ranked in-between $f$ and $m$ according to $P_h$. Then, aversion to couple diversity says that the set  $\{d_1,d_2\}$ must be preferred to the couple $c$ according to $P_h$. Note here that if both the members of the couple are in the top-$\kappa_h$ doctors according to the restriction of $P_h$ over individual doctors, then we do not need this condition as for a stable matching, the couple will always be a part of $h$.

So, in other words, whenever a hospital compares a couple and another pair of doctors over which responsiveness does not induce the comparison, the hospital prefers the couple only if at most one doctor from the other pair ranks in-between the members of the couple. Thus, a hospital has aversion to couple diversity if it does not like to employ a couple whose members have relatively more dissimilar ranking in its preference. 

It is important to note here that the diversity aversion just applies to couples and not single doctors as two single doctors can not apply to a hospital together and block a matching. On the other hand, we can encounter a situation where a couple applies to a hospital such that a member of a couple is individually worse off but the couple is better off as a whole.

For a profile of preferences $P^0_{H}$ of the hospitals, an extension of $P^0_{H}$ is defined as a preference profile $P=(\{P_d\}_{d \in D}, \{P_c\}_{c \in C},\{P_h\}_{h \in H}  )$ such that $P_h = P^0_h$ for all $h \in H$. 

Our next theorem says that the aversion to couple diversity of $P^0_H$ is  necessary and sufficient  for the existence of a stable matching for every extension of $P^0_H$. 

\begin{theorem}\label{hosp}	
	\begin{itemize}
		\item[(i)]	If $P^0_{H}$ satisfies aversion to couple diversity property, then a stable matching exists at every extension  of $P^0_{H}$.
		
		\item[(ii)] If $P^0_{H}$ does not satisfy aversion to couple diversity property, then there exists an extension of $P^0_{H}$ with no stable matching.
	\end{itemize}		 
\end{theorem}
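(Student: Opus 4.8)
The plan is to prove the two parts by different routes: part (i) by showing that the outcome of the DPDA (with each couple member proposing by his or her individual preference) is stable, reusing the blocking-exclusion lemmas already established, and part (ii) by exhibiting an explicit extension that generalizes Example \ref{ex2}.

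For part (i), let $\mu$ be the DPDA outcome. Since $\mu$ is individually rational, Lemmas \ref{HS}, \ref{HC} and \ref{UC} leave only the possibility that $\mu$ is blocked by some $((h,h),c)$ with $c=\{f,m\}$. Assume without loss of generality that $f P_h m$. Lemmas \ref{No1} and \ref{No2} force $\mu(f)\neq h\neq\mu(m)$ and $\mu(f)P_f h$, so responsiveness of $P_c$ together with $(h,h)P_c(\mu(f),\mu(m))$ yields $hP_m\mu(m)$. By Remark \ref{ton}, $m$ proposed to $h$ and was rejected; hence $m\in D_h$, the hospital is full with $|\mu(h)|=\kappa_h$, and every doctor in $\mu(h)$ is $P_h$-preferred to $m$. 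Since $f\notin\mu(h)$ and $fP_h m$, this gives $|\{d:dP_h m\}|\geq\kappa_h+1>\kappa_h$.

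The crux is to contradict the hospital side of the block. Because $h$ is full, the most advantageous exchange discards the two $P_h$-worst doctors $d_1,d_2\in\mu(h)$ (say $d_1P_h d_2$) and inserts $\{f,m\}$, and by responsiveness this improves $\mu(h)$ iff $\{f,m\}P_h\{d_1,d_2\}$. If $fP_h d_1$, then $fP_h d_1 P_h d_2 P_h m$, and combined with $|\{d:dP_h m\}|>\kappa_h$ the aversion to couple diversity property gives $\{d_1,d_2\}P_h\{f,m\}$. Otherwise $d_1P_h f$, and since $d_1,d_2P_h m$ one checks by responsiveness (pairing $d_1$ against $f$ and $d_2$ against $m$) that again $\{d_1,d_2\}P_h\{f,m\}$. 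In either case $h$ cannot improve on $\mu(h)$ by taking $\{f,m\}$, contradicting the block; hence $\mu$ is stable.

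For part (ii), suppose the violation is witnessed by $h$, $c=\{f,m\}$ and $d_1,d_2$ with (say) $fP_h d_1 P_h d_2 P_h m$, $|\{d:dP_h m\}|>\kappa_h$ and $\{f,m\}P_h\{d_1,d_2\}$. I would first replace $d_1,d_2$ by the two $P_h$-lowest doctors $d_1^{\ast},d_2^{\ast}$ lying strictly above $m$; lowering the pair only weakens it, so responsiveness and transitivity preserve $\{f,m\}P_h\{d_1^{\ast},d_2^{\ast}\}$, while the hypothesis on $f$ gives $fP_h d_1^{\ast}$. The count $|\{d:dP_h m\}|>\kappa_h$ then supplies at least $\kappa_h-2$ doctors, other than $f$, that are $P_h$-strictly above $d_1^{\ast}$; designate $\kappa_h-2$ of them as \emph{fillers} $g_1,\dots,g_{\kappa_h-2}$. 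The extension mimics Example \ref{ex2}: each filler and each of $d_1^{\ast},d_2^{\ast}$ finds only $h$ acceptable, $f$ ranks $\lambda$ above $h$ while $m$ ranks $h$ above $\lambda$, the couple preference is the responsive order with $(\lambda,h)$ on top, followed by $(h,h)$, then $(\lambda,\lambda)$, then $(h,\lambda)$, and all other doctors, couples and hospitals are made inert (every other hospital unacceptable to everyone, every other doctor finding all hospitals unacceptable). Since the fillers sit $P_h$-above $d_1^{\ast},d_2^{\ast},m$ and want only $h$, any individually rational stable $\mu$ must place all of them in $\mu(h)$, leaving at most two effective seats; the case analysis $\mu(c)\in\{(\lambda,h),(\lambda,\lambda),(h,\cdot)\}$ then reproduces cases (i)--(iii) of Example \ref{ex2} (using $\{f,m\}P_h\{d_1^{\ast},d_2^{\ast}\}$ in the $(\lambda,\lambda)$ case and $r_1(P_f)=\lambda$ whenever $f$ is placed at $h$), so no stable matching exists.

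The step I expect to be the main obstacle is the robustness of the filler construction: with $P_h$ fixed I cannot simply declare the occupants of the surplus seats, so I must argue from the counting hypothesis alone that enough doctors lie $P_h$-strictly above $d_1^{\ast}$ to be locked into $\mu(h)$ and never displaced by $f,m,d_1^{\ast},d_2^{\ast}$ (which is why the two lowest doctors above $m$, rather than the original $d_1,d_2$, must be selected, and why the $(h,\cdot)$ case is treated first to rule out $\mu(f)=h$). A related bookkeeping difficulty is that $d_1^{\ast},d_2^{\ast}$ and some fillers may be members of other couples rather than single doctors; their ``blocking with $h$'' must then be realized through a couple--hospital block, which I would arrange by making the relevant couple responsive with its second member inert, so that a unilateral move of $d_i^{\ast}$ to $h$ is couple-improving exactly as in the single-doctor case.
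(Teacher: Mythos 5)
Your proposal is correct and follows essentially the same route as the paper: part (i) runs DPDA and combines Lemmas \ref{HS}, \ref{HC}, \ref{UC}, \ref{No1}, \ref{No2} with the counting argument $|\{d: d P_h m\}| > \kappa_h$ to invoke aversion to couple diversity (your explicit split between the case $f P_h d_1$, handled by the axiom, and $d_1 P_h f$, handled by responsiveness alone, is exactly what the paper's terser wording relies on), while part (ii) is the paper's reduction to Example \ref{ex2}, blocking off all other hospitals and using first-choice-$h$ fillers to leave two effective seats contested by $f$, $m$ and a pair of intermediate doctors. Your two refinements in part (ii) --- selecting the two $P_h$-lowest doctors above $m$ so the fillers sit strictly above the contested pair, and realizing blocks by fillers who happen to be couple members through responsive couple preferences with an inert partner --- tighten details that the paper's construction leaves implicit, but the underlying construction and case analysis are the same.
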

\begin{proof} [Part (i)]   Consider a preference profile $P$ that is an arbitrary extension of $P^0_{H}$ where $P^0_{H}$ satisfies aversion to couple diversity. We show that the DPDA where each member of each couple proposes according to his/her individual preference gives a stable matching for $P$. 
	
Let $\mu$ be the outcome. Suppose $\mu$ is not stable at $P$. Since DPDA is individually rational,  Lemma \ref{HS}, Lemma \ref{HC} and Lemma \ref{UC} imply that $\mu$ is blocked by $((h,h),c)$ for some $h \in H$ and some $c=\{f,m\} \in C$. Let $\mu(f)=h_f$ and $\mu(m)=h_m$. Thus $(h,h)P^0_c(h_f,h_m)$. 
	
	Assume without loss of generality that $f P_h m$. If $m \notin D_h$, then $((h,h),c)$ cannot block $\mu$ as it violates individual rationality. Thus $m P^0_h \lambda$. By Lemma \ref{No1} and Lemma \ref{No2}, we know that $h_f P^0_f h$. Thus, responsiveness implies $h P_m h_m$. 
	
	It follows that, before applying to $h_m$,  $m$ applied to $h$ and got rejected. Therefore, $|\{d: d P^0_h m  \mbox{ and } d\in \mu(h)\}| =\kappa_h$. Since $f P^0_h m$ and $f \notin \mu(h)$, we have $|\{d:d P^0_h f\}| > \kappa_h$. This, together with aversion to couple diversity, implies that $\{d,d'\}P^0_h \{f,m\}$ for all $d,d' \in  \mu(h)$, which is a contradiction to the fact that $((h,h),c)$ blocks $\mu$.  This completes the proof of part (i) of Theorem \ref{hosp}. 
	
	\medskip

	\noindent[Part (ii)] Suppose $P^0_H$ does not satisfy the aversion to couple diversity  property. We show that there is an extension of $P^0_{H}$ with no stable matching. 
	
	Since $P^0_H$ does not satisfy the aversion to couple diversity  property, we have $h \in H$,  $c=\{f,m\} \in C$ and  $d_1,d_2 \in D$ with $f,m \in D_h$ such that  $ \{f,m\} P^0_h \{d_1,d_2\}$ and either  \\
	(i) $f P^0_h d_1 P^0_h d_2 P^0_h m$ and  $|\{d:d P^0_h m\}| > \kappa_h$,  or \\
	(ii) $m P^0_h d_1 P^0_h d_2 P^0_h f$ and  $|\{d:d P^0_h f\}| > \kappa_h$.

	Assume without loss of generality that (i) holds. Consider a preference profile $P$ such that
	\begin{itemize}
		\item[(i)] $\lambda P_f h$,
		\item[(ii)] $h P_m \lambda$,
		\item[(iii)] $(h,h)P_c(\lambda,\lambda)$,
		\item[(iv)] for all $h'\in H \setminus \{h\}$ either $h P_f h'$, or $|\{d:dP^0_{h'} f \mbox{ and } r_1(P_d)=h'\}|=\kappa_{h'}$,
		\item[(v)] for all $h'\in H \setminus \{h\}$ either $\lambda P_m h'$, or $|\{d:dP^0_{h'} m \mbox{ and } r_1(P_d)=h'\}|=\kappa_{h'}$, 
        \item[(vi)] $r_1(P_{d_1})=r_1(P_{d_2})=h$, and
		\item[(vii)]  $|\{d:dP^0_h m \mbox{ and } r_1(P_d)=h\}|=\kappa_h$. Note, that this also includes $d_1$ and $d_2$.  
	\end{itemize}

	But it trivially follows from the given preference profile that for a stable matching $\mu$, for all $h' \in H \setminus \{h\}$, $h' \notin \{\mu(f),\mu(m)\}$. Thus, by our construction, a stable matching exists for this matching problem if and only if there is a stable matching for Example \ref{ex2}. However, since there does not exist a stable matching for Example \ref{ex2}, thus we do not have a stable matching for $P$. This completes the proof of part (ii) of Theorem \ref{hosp}.
\end{proof}

From the example above, it is not clear if non-aversion to couple diversity can always lead to a preference profile with no stable matching when doctors prefer to be matched any hospital than being unemployed. We show by the means of an example that if all the doctors are averse to unemployment, then we can not always obtain an extension of $P^0_H$ with no stable matching, when $P^0_H$ does not satisfy aversion to couple diversity.

\begin{example}\normalfont
	Consider a matching problem with $H=\{h_1,h_2\}$ and $D=\{f,m,s_1,s_2\}$ such that $c=\{f,m\}$ is the only couple. Let $\kappa_{h_1}=\kappa_{h_2}=2$. Thus, $\sum_{h \in H} \kappa_h =|D|$. The preferences of hospitals on individual doctors is given in the table below. The doctors prefer to be matched to any hospital than being unemployed.
	
	\begin{table}[!hbt]
		\centering
		\begin{tabular}{c c}
			\hline
			$P_{h_1}$ & $P_{h_2}$\\
			\hline
			$f$ & $s_2$\\
			$s_1$ & $m$\\
			$s_2$ & $f$\\
			$m$ & $s_1$\\
			\hline
		\end{tabular}
		\label{tb4}
	\end{table}
	\noindent Let $\{f,m\} P_{h_1} \{s_1,s_2\}$. Thus the preference $h_1$ does not follow aversion to couple diversity.  
	
	We show that there exists a stable matching for these preferences of hospitals for any preferences of the doctors and the couple. 
	
	Let $\mu$ be a matching for the given preferences of the hospitals such that $\mu(h_1)=\{f,s_1\}$ and $\mu(h_2)=\{s_2,m\}$. Clearly, $h_1$ and $h_2$ have their top ranked doctors. Thus, neither $h_1$ nor $h_2$ would like to block $\mu$ with any other doctor. Also, no doctor would block $\mu$ with $\lambda$ as all doctors prefer being matched to any hospital than be unemployed. Thus $\mu$ is stable for any preferences of doctors even when there is no aversion for diverse couples.
\end{example}

The above example leads to the following corollary.

\begin{cor}
	Suppose $h P_d \lambda$ for all $h \in H$ and all $d \in D$, then a stable matching always exists for any extension of $P^0_H$ when $P^0_H$ satisfies aversion to diverse couples.
\end{cor}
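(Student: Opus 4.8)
The plan is to obtain this as an immediate specialization of Theorem \ref{hosp}, Part (i). That result already asserts that whenever $P^0_H$ satisfies aversion to couple diversity, a stable matching exists at \emph{every} extension of $P^0_H$, with no assumption whatsoever placed on the preferences of the individual doctors or the couples. The extra hypothesis $h P_d \lambda$ for all $h \in H$ and all $d \in D$ only singles out the subfamily of extensions in which every doctor ranks each hospital above unemployment. Since Part (i) delivers a stable matching for the full family of extensions, it \emph{a fortiori} delivers one for this subfamily, which is exactly what the corollary claims. So the first and essentially only step is to invoke Theorem \ref{hosp}(i) and observe that the added hypothesis merely prunes the domain of admissible extensions.

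For completeness I would reproduce the one-line mechanism underlying Part (i). Run DPDA in which each member of each couple proposes according to his or her induced marginal preference $P_f$, $P_m$, and let $\mu$ denote the outcome. Since DPDA is individually rational, Lemmas \ref{HS}, \ref{HC}, and \ref{UC} leave only one candidate block, namely a pair $((h,h),c)$ with $c=\{f,m\}$ and a single hospital $h$. The argument from the proof of Theorem \ref{hosp}(i) then applies verbatim: assuming without loss $f P_h m$, if such a block existed then $m$ must have proposed to $h$ and been rejected, so the $\kappa_h$ seats of $h$ are occupied by doctors preferred to $m$; combined with $f P^0_h m$ and $f \notin \mu(h)$ this yields $|\{d : d P^0_h f\}| > \kappa_h$, and aversion to couple diversity forces $\{d,d'\} P^0_h \{f,m\}$ for all $d,d' \in \mu(h)$, contradicting that $((h,h),c)$ blocks $\mu$.

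The only point that needs checking — and the natural place an error could hide — is that the hypothesis on doctors' preferences does not disturb this chain of reasoning. It does not: that hypothesis is used nowhere in the blocking analysis above, and its sole effect is to restrict which extensions are considered. Consequently there is no genuine obstacle, and the corollary follows at once. I would close by noting that the role of the preceding example is complementary rather than supporting: it shows that the \emph{converse} direction, Theorem \ref{hosp}(ii), may fail once all doctors are averse to unemployment, so that aversion to couple diversity ceases to be necessary under this restriction, even though — as the corollary records — it remains sufficient.
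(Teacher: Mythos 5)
Your proof is correct: read literally, the corollary asserts only that aversion to couple diversity remains \emph{sufficient} when attention is restricted to extensions in which every doctor ranks every hospital above $\lambda$, and since Theorem \ref{hosp}(i) already yields a stable matching for \emph{all} extensions of $P^0_H$ with no hypothesis whatsoever on doctors' or couples' preferences, the corollary is an immediate specialization, exactly as you argue; your re-run of the DPDA blocking analysis confirms that the unemployment hypothesis is never used in that argument. Where you diverge from the paper is in the logical bookkeeping. The paper does not derive the corollary from Theorem \ref{hosp}(i); it writes that ``the above example leads to the following corollary,'' presenting it as a consequence of the preceding example in which $P_{h_1}$ violates aversion to couple diversity yet a stable matching exists for every choice of doctors' and couples' preferences. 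That attribution cannot carry the sufficiency claim: a single profile that \emph{violates} the condition and still admits stable matchings can never prove that profiles \emph{satisfying} the condition admit them. What the example actually establishes --- and what you correctly isolate in your closing remark --- is that under the restriction $h P_d \lambda$ the \emph{necessity} half, Theorem \ref{hosp}(ii), breaks down: the construction in part (ii) requires an extension with $\lambda P_f h$, which this restriction forbids, so a violation of aversion to couple diversity no longer forces the existence of an extension with no stable matching. Your route (sufficiency via Theorem \ref{hosp}(i), with the example relegated to demonstrating non-necessity) is the correct deduction and cleanly repairs the paper's misleading framing.
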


\section{Concluding remarks} \label{sec5}
As we have discussed earlier, existence of a stable matching is guaranteed at a preference profile if it satisfies responsiveness as defined in Klaus and Klijn\cite{klaus2005responsive}. The different result in our paper stems from the fact that we allow for a setwise blocking notion, which allows for the existence of more blocking coalitions. Here, a hospital is allowed to replace two doctors by a couple, whilst Klaus and Klijn consider pairwise blocking. Thus, if a couple wishes to block with a hospital, both the members of the couple will be considered separately by the hospital instead of considering the couple as a whole. We explain this in detail in the following paragraph.

Consider Example \ref{ex} and Example \ref{ex2}. According to the model in Klaus and Klijn\cite{klaus2005responsive}, given the preference $\bar{P}_{h_1}$, the pair $\{f,m\}$ can not block with $h_1$ to remove the pair of doctors $ \{d_1,d_2\}$. By their blocking notion, each member of the couple can only replace a doctor who is ranked lower to that member of the couple.  The fact that $d_1$ and $d_2$ are ranked in-between  $f$ and $m$, prevent the couple to block with $h_1$.

Now, consider the matching  $\mu_1$ for Example \ref{ex} and $\mu_2$ for Example \ref{ex2} such that $\mu_1(c)=(h_2,h_3)$, $\mu_1(d_1)=\mu_1(d_2)=h_1$ and $\mu_2(c)=(\lambda,\lambda)$, $\mu_2(d_1)=\mu_2(d_2)=h_1$. Note that, in our model, both these matchings are blocked by $((h_1,h_1),c)$. However, this block is \emph{not} possible according to the model in Klaus and Klijn\cite{klaus2005responsive}. It can be verified that $\mu_1$ and $\mu_2$ are indeed stable according to their  model.

In this paper, we have shown that the existence of a stable matching is not guaranteed when couples and/or hospitals have complete and responsive preferences. We have  provided (a) necessary and sufficient conditions on couples' preferences  so that a stable matching exists at every extension of those preferences, and (b) necessary and sufficient conditions on hospitals' preferences  so that a stable matching exists at every extension of those preferences. 
Additionally, we have provided algorithms that produce a stable matching whenever that exists in this framework.

		% ----------------------------------------------------------------
	\bibliography{mybib}

	\bibliographystyle{amsplain}

	\end{document}